\newtheorem{theorem}{Theorem}
\newtheorem{lemma}{Lemma}
\newtheorem{definition}{Definition}
\newtheorem{corollary}{Corollary}
\newtheorem{remark}{Remark}
\newtheorem{example}{Example}
\newtheorem{problem}{Problem}
\begin{document}
%

\title{Technical Report: NUS-ACT-12-001-Ver.3:
Cooperative Tasking for Deterministic Specification
Automata\\Preprint, Submitted for publication}

\author{Mohammad~Karimadini,
        and~Hai~Lin,
\thanks{M. Karimadini is with the Department of Electrical and Computer Engineering, National University of Singapore,
Singapore. H. Lin is from the Department of Electrical Engineering
University of Notre Dame, IN 46556 USA. Corresponding author, H. Lin
\{\small hlin1@nd.edu\}. } }

\maketitle \thispagestyle{empty} \pagestyle{empty}

\begin{abstract}
In our previous work \cite{Automatica2010-2-agents-decomposability},
a divide-and-conquer approach was proposed for cooperative tasking
 among multi-agent systems. The basic idea is to decompose a requested
global specification into subtasks for individual agents such that
the fulfillment of these subtasks by each individual agent leads to
the satisfaction of the global specification as a team. It was shown
that not all tasks can be decomposed. Furthermore, a necessary and
sufficient condition was proposed for the decomposability of a task
automaton between two cooperative agents. The current paper
continues the results in \cite{Automatica2010-2-agents-decomposability} and proposes
necessary and sufficient conditions for task decomposability with
respect to arbitrary finite number of agents. It is further shown
that the fulfillment of local specifications can guarantee the
satisfaction of the global specification.
This work
provides hints for the designers on how to rule out the
indecomposable task automata and enforce the decomposability
conditions. The result therefore may pave the way towards a new
perspective for decentralized cooperative control of multi-agent
systems.
\end{abstract}

\section{INTRODUCTION}
Multi-agent system emerges as a rapidly
developing multi-disciplinary area with
promising applications in
assembling and transportation, parallel computing, distributed planning and scheduling, rapid emergency
response and multi-robot systems \cite{Lima2005,
 Bo2010, Daneshfar2009, bullo2009distributed}. The significance of
multi-agent systems roots in the power of parallelism and
cooperation between simple components that synergically lead to
sophisticated capabilities, robustness and functionalities \cite{Lima2005, Lesser1999}.
The cooperative control of distributed multi-agent systems,
however, is still in its infancy with significant practical and
theoretical challenges that are difficult to be formulated and
tackled by the traditional methods \cite{Tabuada2006, Kloetzer2007}.
Among these challenges, one essential issue is the top-down cooperative control to achieve a desired global behavior through the design of local control laws and interaction rules \cite{Crespi2008}. Top-down cooperative control is typically synthesized in two levels of abstraction: control level and planning (supervisory) layer \cite{murray2007recent}.

 Control level deals with the time-driven continues dynamics of each agent, dynamic topology and on-line interactions among the agents, in order for real-time tracking of exact trajectories, collision avoidance, formation stability and optimal performance \cite{bullo2009distributed, martinez2007motion}. For this purpose, several innovative approaches have been developed such as
biomimicry of biological swarms and symbolic swarming \cite{Reynolds1987, Kloetzer2007},
consensus seeking and formation
stabilization  \cite{Jadbabaie2003}, navigation functions for distributed
formation \cite{Tanner2005}, artificial potential functions \cite{Reif1995}, graph Laplacians for the associated neighborhood graphs
\cite{Jadbabaie2003, Ji2009, tanner2007flocking}, graph-based formation stabilization and coordination \cite{zelazo2011graph, fax2004information, tanner2004leader, cortes2009global},
passivity-based control \cite{arcak2007passivity, zelazo2011edge},
distributed predictive control \cite{richards2004decentralized}, game theory-based coordinations
\cite{semsar2009multi} and potential games and mechanism design \cite{heydenreich2007games, christodoulou2004coordination}.
    These methods successfully model the interactions among the agents using the topology graph and apply Lyaponuv-like energy functions and optimization methods for stabilization and formation of the continuous states of the agents.

In the planning level, on the other hand, one concerns with the event-driven dynamics, discrete modes and logical specifications such as visiting successive regions, orchestrating between local controllers and path planning for the control layer. One of the main challenges in the planning level is the cooperative tasking to allocate local tasks to each agent such that a desired logical specification is globally satisfied by the team.
Confining to the planning level, this paper and its companion
\cite{Automatica2010-2-agents-decomposability} aim at developing a
top-down correct-by-design method for distributed coordination and
cooperative tasking of multi-agent systems such that the group of agents, as a
team, can achieve the specified logical requirements, collectively.
We assume here that
the global specification is given as a finite deterministic
automaton that is simpler to be characterized; covers a wide
class of tasks
in the context of supervisory control of discrete event
systems \cite{Ramadge1987}, and can uniquely encode the sequence of events in a finite memory space
using the notions of states and transition relations. Accordingly, the logical behavior of a
multi-agent system can be modeled as a parallel distributed
system \cite{Mukund2002} that having the union of local event sets, allows the agents to individually transit on their private events, while synchronize on shared events for cooperative tasks.
Since in this set up, each agent will have access to its local set of sensor readings and
actuator commands, the interpretation of each agent from the global
task automaton can be obtained through natural projection of the
global task into the corresponding local observable events \cite{Morin1998}.
The composition of these local task should be able to retrieve the global task in order to perform the cooperative tasking.
For this purpose, we are particularly interested in task automaton decomposability (also called synthesis modulo problem)
to understand that under what conditions the collective perception of the team
from the global specification (the parallel composition of local task automata) is equivalent to the original global
task. Generally, three types of equivalence relations have been studied in
the top-down cooperative control in order to compare the global task
automaton with the collective one \cite{Mukund2002, Morin1998, Chen2010}:
isomorphism, language equivalence and bisimulation. Bisimulation-based decomposability is less restrictive than synthesis modulo isomorphism and more applicable in control applications, while it is more expressive than language separability \cite{Willner1991}. Moreover, it preserves the nondeterminism that might appear in  the collective tasks, even for deterministic global task automata.

Given a task automaton and the distribution of its events among the
agents, we have shown in
\cite{Automatica2010-2-agents-decomposability} that it is not always
possible to decompose an automaton into sub-automata by natural
projections, where the parallel composition of these sub-automata is
bisimilar to the original automaton, and subsequently necessary and sufficient conditions were identified for the decomposability of deterministic task automaton with respect to two local event sets.
For more than two agents, a sufficient condition was proposed in
\cite{Automatica2010-2-agents-decomposability} by introducing a
hierarchical approach to iteratively use the decomposability for two
agents. Therefore, the main part of this paper is set to provide new
necessary and sufficient conditions for the decomposability of a
task automaton with respect to an arbitrary finite number of agents.
The extension is not straightforward and requires logical modifications
on the conditions for the two-agent result.

Please note that the main contribution of the current work is to
gain insights on decomposability of a task automaton rather than
checking the decomposability itself. The proposed decomposability
conditions provide us with hints on how to rule out the
indecomposable automata and how the configuration of local
transitions and distribution of events among the agents should be in
order for decomposability. It is shown that an automaton is
decomposable if and only if any
decision on the order or selection between two transitions can be  made by at least one of the agents, the interleaving of any pair of strings
after synchronizing on a shared event does not introduce a new string
that is not in the original automaton (the interleaving of local
task automata does not allow an illegal global behavior), and each
local task automaton bisimulates a deterministic automaton (to
ensure that the collection of local tasks does not disallow a legal
global behavior). These insights are important since they
give us guidelines on how to set a global task to be fulfilled,
cooperatively, by the team of agents.

The rest of the paper is organized as follows. Preliminary notations, definitions and problem formulation are represented in Section \ref{PROBLEM FORMULATION}. Section \ref{TASK DECOMPOSITION
FOR $n$ AGENTS} introduces the necessary and sufficient conditions
for decomposability of an automaton with respect to parallel
composition and an arbitrary finite number of local event sets.
Finally, the paper concludes with remarks and discussions
in Section \ref{CONCLUSION}. The proofs of lemmas are provided in the Appendix.
\section{PROBLEM FORMULATION}\label{PROBLEM FORMULATION}
We first recall the definition of deterministic automaton \cite{Kumar1999}.
\begin{definition}(Automaton)\label{Automaton}
A deterministic automaton is a tuple $A := \left(Q, q_0, E, \delta
\right)$ consisting of a set of states $Q$; an initial state $q_0\in
Q$; a set of events $E$ that causes transitions between the states,
and a transition relation $\delta \subseteq Q \times E\times Q$
(with a partial map $\delta: Q \times E \to Q$), such that $(q, e,
q^{\prime})\in \delta$ if and only if state $q$ is transited to
state $q^{\prime}$ by event $e$, denoted by
$q\overset{e}{\underset{}\rightarrow } q^{\prime}$ (or $\delta(q, e)
= q^{\prime}$).
In general the automaton also has an argument $Q_m
\subseteq Q$ of marked (accepting or final) states to assign a
meaning of accomplishment to some states. For an automaton whose
each state represents an accomplishment of a stage of the
specification, all states can be considered as marked states and
$Q_m$ is omitted from the tuple.
\end{definition}

With an abuse of notation, the definitions of the transition
relation can be extended from the domain of $Q \times E$
into the
domain of $Q \times E^*$
 to define transitions over strings $s\in
E^*$, where $E^*$ stands for the $Kleene-Closure$ of $E$ (the
collection of all finite sequences of events over elements of $E$).
\begin{definition}(Transition on strings)\label{Transition on strings}
For a deterministic automaton the existence of a transition over a
string $s\in E^*$ from a state $q\in Q$ is denoted by $\delta(q,
s)!$ and inductively defined as $\delta(q,\varepsilon) = q$, and
$\delta(q,se)=\delta(\delta(q,s),e)$ for $s\in E^*$ and $e\in E$.
The existence of a set $L\subseteq E^*$ of strings from a state
$q\in Q$ is then denoted as $\delta(q, L)!$ and read as $\forall
s\in L: \delta(q, s)!$.
\end{definition}

The transition relation is a partial relation, and in general some
of the states might not be accessible from the initial state.
\begin{definition}\label{accessible}
The operator $Ac(.)$ \cite{Cassandras2008} is then defined by
excluding the states and their attached transitions that are not
reachable from the initial state as $Ac(A) = \left(Q_{ac}, q_0, E,
\delta_{ac} \right)$ with
$Q_{ac}=\{q\in Q|\exists s\in E^*, q \in
\delta (q_0, s)\}$ and $\delta_{ac}= \{(q, e, q^{\prime})\in \delta|e\in E, q, q^{\prime}\in Q_{ac}\}$.
Since $Ac(.)$ has  no effect on the behavior of the automaton, from
now on we take $A = Ac(A)$.
\end{definition}



The qualitative behavior of a deterministic system is described by its language defined as
\begin{definition}(Language, language equivalent automata)
For a given automaton $A$, the language generated by $A$ is defined
as $L(A):= \{s\in E^*|\delta(q_0, s)!\}$. Two automata $A_1$ and
$A_2$ are said to be language equivalent if $L(A_1) = L(A_2)$.
\end{definition}

In cooperative tasking, each agent has a local observation from the
global task: the perceived global task, filtered by its local event
set, i.e., through a mapping over each agent's
event set, as the interpretation of each agent from the global
task.
Particularly,  natural projections
$P_{E_i}(A_S)$ are obtained from $A_S$ by replacing its events that
belong to $E\backslash E_i$ by $\varepsilon$-moves, and then,
merging the $\varepsilon$-related states. The  $\varepsilon$-related
states form equivalent classes defined as follows.
\begin{definition}(Equivalent class of states, \cite{Morin1998})\label{Equivalent class of states}
Consider an automaton $A_S=(Q, q_0, E, \delta)$ and an event set
$E^{\prime}\subseteq E$. Then, the relation $\sim_{E^{\prime}}$ is
the minimal equivalence relation on the set $Q$ of states such that
$q^{\prime}\in\delta(q, e) \wedge e\notin E^{\prime}\Rightarrow
q\sim_{E^{\prime}} q^{\prime}$, and $[q]_{E^{\prime}}$ denotes the
equivalence class of $q$ defined on $\sim_{E^{\prime}}$.
 The set of equivalent classes of states
over $\sim_{E^{\prime}}$, is denoted by $Q_{/\sim_{E^{\prime}}}$ and defined as
$Q_{/\sim_{E^{\prime}}} = \{[q]_{E^{\prime}}|q\in Q\}$.

$\sim_{E^{\prime}}$ is an equivalence relation as it is reflective ($q\sim_{E^{\prime}} q$), symmetric ($q\sim_{E^{\prime}}
q^{\prime} \Leftrightarrow q^{\prime}\sim_{E^{\prime}}
q$) and transitive ($q\sim_{E^{\prime}}
q^{\prime} \wedge q^{\prime}\sim_{E^{\prime}}
q^{\prime\prime} \Rightarrow q\sim_{E^{\prime}}
q^{\prime\prime}$).

It should be noted that the relation $\sim_{E^{\prime}}$ can be defined
for any $E^{\prime}\subseteq E$, for example, $\sim_{E_i}$ and $\sim_{E_i\cup E_j}$,
respectively denote the equivalence relations with respect to
 ${E_i}$ and $E_i\cup E_j$. Moreover, when it is clear from the context,
  $\sim_i$ is used to denote $\sim_{E_i}$ for simplicity.
\end{definition}

Next, natural projection over strings is denoted by $p_{E^{\prime}}
: E^*\rightarrow E^{\prime*}$, takes a string from the event set $E$
and eliminates events in it that do not belong to the event set
$E^{\prime}\subseteq E$. The natural projection is formally defined
on the strings as
\begin{definition}(Natural Projection on String,
\cite{Cassandras2008})\label{Natural Projection on String} Consider
a global event set $E$ and an event set $E^{\prime}\subseteq E$.
Then, the natural projection $p_{E^{\prime}}:E^*\rightarrow
E^{\prime *}$ is inductively defined as
 $p_{E^{\prime}}(\varepsilon)=\varepsilon$, and
 $\forall s\in E^*, e\in E: p_{E^{\prime}}(se)=\left\{
  \begin{array}{ll}
  p_{E^{\prime}}(s)e & \hbox{if $e\in E^{\prime}$;} \\
  p_{E^{\prime}}(s) & \hbox{otherwise.}
  \end{array}
\right.$
\end{definition}

The natural projection is then formally defined on an automaton as
follows.
\begin{definition}(Natural Projection on Automaton)\label{Natural Projection on Automaton}
Consider a deterministic automaton $A_S = (Q, q_0, E, \delta)$ and
an event set $E^{\prime}\subseteq E$. Then,
$P_{E^{\prime}}(A_S)=(Q_i = Q_{/\sim_{E^{\prime}}},
[q_0]_{E^{\prime}}, E^{\prime}, \delta^{\prime})$, with
$[q^{\prime}]_{E^{\prime}}\in\delta^{\prime}([q]_{E^{\prime}}, e)$
if there exist states $q_1$ and $q_1^{\prime}$ such that
$q_1\sim_{E^{\prime}} q$, $q_1^{\prime}\sim_{E^{\prime}}
q^{\prime}$, and $\delta(q_1, e) = q^{\prime}_1$. Again,
$P_{E^{\prime}}(A_S)$ can be defined into any event set
$E^{\prime}\subseteq E$. For example, $P_{E_i}(A_S)$ and $P_{E_i\cup
E_j}(A_S)$, respectively denote the natural projections of $A_S$
into $E_i$ and $E_i\cup E_j$. When it is clear from the context,
$P_{E_i}$ is replaced with $P_i$, for simplicity.
\end{definition}

The collective task is then obtained using the parallel composition of local task automata, as the perception of the team from the global task.
\begin{definition} (Parallel Composition \cite{Kumar1999}) \label{parallel
composition}\\ Let $A_i=\left( Q_i,q_i^0,E_i,\delta _i\right)$,
$i=1,2$, be automata. The parallel composition (synchronous
composition) of $A_1$ and $A_2$ is the automaton $A_1||A_2=\left(Q =
Q_1 \times Q_2, q_0 = (q_1^0, q_2^0), E = E_1 \cup E_2,
\delta\right)$, with $\delta$ defined as
 $\forall (q_1, q_2)\in Q, e\in E:\\\delta(\left(q_1, q_2),
   e\right)=
    \left\{
\begin{array}{ll}
    \left(\delta_1(q_1, e), \delta_2(q_2, e)\right), & \hbox{if $\delta _1(q_1,e)!, \delta _2(q_2,e)!, e\in E_1 \cap E_2$;}\\
    \left(\delta_1(q_1, e), q_2\right), & \hbox{if $\delta _1(q_1,e)!, e\in E_1 \backslash E_2$;} \\
    \left(q_1, \delta_2(q_2, e)\right), & \hbox{if $\delta _2(q_2,e)!, e\in E_2 \backslash E_1$;} \\
     \hbox{undefined}, & \hbox{otherwise.}
\end{array}\right.$

The parallel composition of $A_i$, $i=1,2,...,n$ is called parallel
distributed system, and is defined based on the associativity
property of parallel composition \cite{Cassandras2008} as
$\overset{n}{\underset{i=1}{\parallel\ } }A_i := A_1\parallel\
...\parallel\   A_n := A_n\parallel \left(A_{n-1}
\parallel \left( \cdots \parallel \left( A_2\parallel
A_1 \right)\right)\right)$.
\end{definition}

The obtained collective task is then compared with the original global task automaton using the bisimulation relation, in order to ensure that the team of agents understands the global specification, collectively.
\begin{definition}(Bisimulation \cite{Cassandras2008})
\label{simulation}
Consider two automata $A_i=( Q_i, q_i^0$, $E,
\delta _i)$, $i=1, 2$.
The automaton $A_1$ is said to be similar to $A_2$ (or $A_2$ simulates $A_1$), denoted
by $A_1\prec A_2$, if there exists a simulation relation from $A_1$ to $A_2$ over $Q_1$, $Q_2$ and with respect to $E$, i.e., (1) $(q_1^0, q_2^0) \in R$, and (2) $\forall\left( {q_1 ,q_2 } \right) \in R,
q'_1 \in \delta_1(q_1, e)$, then $\exists q_2^{\prime}\in Q_2$ such
that $q'_2 \in \delta_2(q_2, e)$, $\left( {q'_1 ,q'_2 } \right) \in
R$ \cite{Cassandras2008}.

Automata $A_1$ and $A_2$ are said to be bisimilar (bisimulate each
other), denoted by $A_1\cong A_2$ if $A_1\prec A_2$ with a simulation relation $R_1$, $A_2\prec A_1$
with a simulation relation $R_2$ and $R_1^{-1} = R_2$ \cite{liu2011bisimilarity}, where $R_1^{-1}= \{(y,x)\in Q_2\times Q_1|(x,y)\in R_1\}$.
\end{definition}

Based on these definitions we may now formally define the
decomposability of an automaton with respect to parallel composition
and natural projections as follows.
\begin{definition}(Automaton decomposability)
 A task automaton $A_S$ with the event set $E$ and local event sets
 $E_i$, $i=1,..., n$, $E = \overset{n}{\underset{i=1}{\cup} } E_i$, is
said to be decomposable with respect to parallel composition and
natural projections $P_i$,
$i=1,\cdots, n$, when $\overset{n}{\underset{i=1}{\parallel} } P_i
\left( A_S \right) \cong A_S$.
\end{definition}


\begin{example}\label{hierarchical algorithm is sufficient} The
automaton $A_S$:
  \xymatrix@R=0.5cm{
&   \bullet\ar[r]^{e_2}& \bullet \ar[r]^{b}& \bullet \ar[r]^{e_3}& \bullet   \ar[r]^{c}& \bullet \ar[r]^{e_5}& \bullet \ar[r]^{d}& \bullet  & \\
\ar[r]&  \bullet \ar[u]^{a} \ar[d]_{e_1} & &  &  &&&\bullet             \\
 &   \bullet\ar[r]^{a}& \bullet \ar[r]^{e_2}& \bullet \ar[r]^{b}& \bullet \ar[r]^{e_3}&
 \bullet  \ar[r]^{c}& \bullet \ar[r]^{e_5}& \bullet \ar[u]^{d}&
                }\\
                 with $E=E_1\cup E_2\cup E_3$, $E_1=\{a, c, d, e_1, e_5\}$,
                $E_2=\{a, b, d,
                e_2\}$, $E_3=\{b, c, e_3\}$, $P_1(A_S)$:
 \xymatrix@R=0.1cm{
&  & \bullet\ar[r]^{c}& \bullet \ar[r]^{e_5}& \bullet \ar[r]^{d}& \bullet  & \\
\ar[r]&  \bullet \ar[ur]^{a} \ar[dr]_{e_1} & &  &               \\
 &  & \bullet\ar[r]^{a}& \bullet \ar[r]^{c}& \bullet \ar[r]^{e_5}& \bullet \ar[r]^{d}& \bullet
                } ,
             $P_2(A_S)\cong \xymatrix@C=0.5cm{
     \ar[r]&  \bullet \ar[r]^{a} &  \bullet \ar[r]^{e_2} &  \bullet \ar[r]^{b} &
     \bullet \ar[r]^{d} &  \bullet
            }$ and
            $P_3(A_S)\cong \xymatrix@C=0.5cm{
     \ar[r]&  \bullet \ar[r]^{b} &  \bullet \ar[r]^{e_3} &  \bullet \ar[r]^{c} &  \bullet
            }$, is decomposable as
$A_s\cong P_1(A_S)||P_2(A_S)||P_3(A_S)$.
\end{example}


\begin{remark} \label{Bisimilar instead of equal}
Since bisimilarity is an equivalence relation it is also transitive,
and hence $P_i(A_S)$'s can be denoted as being bisimilar, rather
than equal to the drawn automata, since $P_i^{\prime}(A_S) \cong
P_i(A_S)$, $i = 1, \ldots, n$, and
$\overset{n}{\underset{i=1}{\parallel} } P_i^{\prime}(A_S) \cong
A_S$ is equivalent to $\overset{n}{\underset{i=1}{\parallel} }
P_i(A_S)\cong A_S$.

\end{remark}

In \cite{Automatica2010-2-agents-decomposability}, we proposed a necessary and sufficient condition for the task decomposability with respect to two agents. For more than two agents a hierarchical algorithm was proposed to iteratively use the decomposability for two agents. The algorithm is a sufficient condition only, as it can decompose the task automaton if at each stage the task is decomposable with respect to one local event set and the rest of agents. For instance in Example \ref{hierarchical algorithm is sufficient} $A_S$, is decomposable as $A_s\cong P_1(A_S)||P_2(A_S)||P_3(A_S)$, and choosing any of local event sets $E_1$, $E_2$ and $E_3$
the algorithm passes the first stage of hierarchical decomposition, as $A_s\cong
P_1(A_S)||(P_2(A_S)||P_3(A_S))\cong
P_3(A_S)||(P_1(A_S)||P_2(A_S))\cong P_2(A_S)||(P_1(A_S)||P_3(A_S))$,
 but it sucks at the second step, as $P_{E_2\cup E_3}(A_S)\ncong P_2(A_S)||P_3(A_S)$,
$P_{E_1\cup E_2}(A_S)\ncong P_1(A_S)||P_2(A_S)$ and $P_{E_1\cup
E_3}(A_S)\ncong P_1(A_S)||P_3(A_S)$).
Moreover, it is possible to show by counterexamples that not all automata are
decomposable with respect to parallel composition and natural
projections (see following example). Then, a natural follow-up
question is what makes an automaton decomposable.
\begin{problem}\label{solution} Given a deterministic task automaton $A_S$
with event set $E = \overset{n}{\underset{i=1}{\cup} } E_i$ and
local event sets $E_i$, $i=1,\cdots, n$, what are the necessary and
sufficient conditions that $A_S$ is decomposable with respect to
parallel composition and natural projections $P_i$, $i=1,\cdots, n$,
such that $\overset{n}{\underset{i=1}{\parallel} } P_i \left( A_S
\right) \cong A_S$?
\end{problem}

\section{TASK DECOMPOSITION FOR $n$ AGENTS}\label{TASK DECOMPOSITION FOR $n$ AGENTS}

The main result on task automaton
decomposition is given as follows.
\begin{theorem}\label{Decomposability Theorem for n agents}
A deterministic
automaton $A_S = \left( {Q, q_0 , E = \bigcup\limits_{i = 1}^n {E_i}
, \delta } \right)$ is decomposable with respect to parallel
composition and natural projections $P_i$, $i=1,...,n$ such that $A_S \cong \mathop {||}\limits_{i =
1}^n P_i \left( {A_S } \right)$ if and only if
\begin{itemize}
\item $DC1$: $\forall e_1,
e_2 \in E, q\in Q$: $[\delta(q,e_1)!\wedge \delta(q,e_2)!]\\
\Rightarrow [\exists E_i\in\{E_1, \ldots, E_n\}, \{e_1,
e_2\}\subseteq E_i]\vee[\delta(q, e_1e_2)! \wedge \delta(q,
e_2e_1)!]$;
\item $DC2$: $\forall e_1, e_2 \in E,  q\in Q$, $s\in E^*$: $[\delta(q,
e_1e_2s)!\vee \delta(q, e_2e_1s)!]\\ \Rightarrow [\exists
E_i\in\{E_1, \ldots, E_n\}, \{e_1, e_2\}\subseteq E_i]\vee [
\delta(q, e_1e_2s)!\wedge \delta(q, e_2e_1s)!]$;
\item $DC3$:
 $\forall q, q_1, q_2 \in Q$, strings $s, s^{\prime}$ over $E$, $\delta(q, s)= q_1$, $\delta(q,
s^{\prime})= q_2$, $\exists i, j \in \{1, \cdots, n\}$, $i \neq j$, $p_{E_i\cap E_j}(s)$, $p_{E_i\cap E_j}(s^{\prime})$ start with $a\in E_i\cap E_j$: $\mathop {||}\limits_{i = 1}^n P_i \left( {A} \right)\prec
A_S(q)$ (where $A:=\xymatrix@R=0.1cm{
                \ar[r]&  \bullet\ar[r]^{s}\ar[dr]_{s^{\prime}}&  \bullet \\
                &&\bullet    }$ and $A_S(q)$ denotes an automaton that is obtained from $A_S$, starting from $q$, and
\item $DC4$: $\forall i\in\{1,...,n\}$, $x, x_1, x_2 \in Q_i$, $x_1\neq x_2$,
$e\in E_i$, $t\in E_i^*$, $x_1\in\delta_i (x, e)$, $x_2\in\delta_i
(x, e)$: $\delta_i (x_1, t)! \Leftrightarrow \delta_i(x_2, t)!$.
 \end{itemize}
\end{theorem}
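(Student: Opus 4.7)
The plan is to prove the two directions of the biconditional separately, first necessity and then sufficiency, following the pattern established for the two-agent case in \cite{Automatica2010-2-agents-decomposability} but adapting it carefully to the $n$-agent setting.

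For necessity, I would assume $A_S\cong\|_{i=1}^n P_i(A_S)$ and derive each of DC1--DC4 by contraposition, exhibiting a concrete failure of bisimulation whenever a condition is violated. For DC1, if at some reachable $q$ two events $e_1,e_2$ are both enabled but no local event set contains both, then in each projection $P_i(A_S)$ the two events lie in ``independent'' branches (each belongs to at most one agent's alphabet), so the synchronous product enables both orderings $e_1e_2$ and $e_2e_1$ from the product-state corresponding to $q$; if $A_S$ does not enable both, the forward simulation from $\|_{i=1}^n P_i(A_S)$ to $A_S$ fails. DC2 is the string-continuation refinement of the same argument: the projection collapses $e_1e_2s$ and $e_2e_1s$ onto identical images in every $P_i$ when $\{e_1,e_2\}\not\subseteq E_i$, so the composition must enable both continuations. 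DC3 handles the branching case: from two strings $s,s'$ whose projections onto $E_i\cap E_j$ start with a shared event $a$, the parallel composition forces the product automaton to simulate the ``fork'' automaton $A$, and this forked behaviour must be covered by $A_S(q)$. DC4 captures a subtler point: if some $P_i(A_S)$ fails to be bisimilar to a deterministic automaton, then a pair of $e$-successors $x_1\neq x_2$ with distinct continuation languages propagates non-determinism into $\|_{i=1}^n P_i(A_S)$, which $A_S$ (being deterministic) cannot match.

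For sufficiency, assume DC1--DC4 and construct an explicit bisimulation
\[
R=\{(q,(x_1,\ldots,x_n))\in Q\times \prod_{i=1}^n Q_i \mid \exists s\in E^*,\ \delta(q_0,s)=q\ \wedge\ x_i=[q]_{E_i}\ \text{for all }i\},
\]
restricted to states reachable from $(q_0,([q_0]_1,\ldots,[q_0]_n))$. The forward direction is the easy half: if $\delta(q,e)=q'$ in $A_S$, then for each $i$ with $e\in E_i$ the projection has the matching transition $[q]_i\to[q']_i$, and for $i$ with $e\notin E_i$ the component stays put, so by the definition of $\|$ the product-state advances as required, keeping us inside $R$. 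The backward direction is where the four conditions enter: suppose the product executes $e$ from $(x_1,\ldots,x_n)$, meaning each $P_i(A_S)$ with $e\in E_i$ has a matching local transition. I need to lift this to a concrete transition $\delta(q,e)!$ in $A_S$. By definition of the projection, the local transitions are witnessed by strings $s_i$ with $q\sim_i\delta(q_0,s_i)$ and $e$ enabled at $\delta(q_0,s_i)$; I would use DC1 and DC2 iteratively to ``commute'' the non-$E_i$ events of $s_i$ past prefixes that produce $q$, showing that $e$ must in fact be enabled at $q$ itself. DC3 is invoked to rule out the situation where the lifted transitions from different agents refer to genuinely different branches of $A_S$, and DC4 guarantees that the continuation language in each $P_i(A_S)$ is unambiguously determined by the reached equivalence class, so the reached product-state $([q']_1,\ldots,[q']_n)$ really is paired with a unique $q'$.

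The central obstacle, I expect, is the backward simulation step. In the two-agent case it is already delicate; for $n$ agents the interleaving freedom grows combinatorially, and a single event $e$ may be constrained by several projections simultaneously, each providing its own witness string. The technical heart of the argument will be an induction on the length of the string $s$ reaching $q$, where at each extension one must simultaneously apply DC1/DC2 to normalise the witnesses produced by every projection that sees the new event, apply DC3 to align the branches across pairs of agents that share the event, and appeal to DC4 to collapse the possible non-determinism introduced by the projection into a single consistent target state. I would organise the proof by first establishing several lemmas in the appendix: (i) a commutation lemma saying that DC1$+$DC2 allow arbitrary reordering of events that no agent sees simultaneously, (ii) a fork-cover lemma from DC3 saying that any product-branching is simulated in $A_S(q)$, and (iii) a determinisation lemma from DC4 saying each $P_i(A_S)$ bisimulates a deterministic automaton, so that the product behaves deterministically from states in $R$. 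With these lemmas in hand, the bisimulation verification reduces to bookkeeping.
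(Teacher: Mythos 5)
Your proof is organized quite differently from the paper's: the paper splits the theorem into three lemmas --- (i) $A_S \prec \mathop{||}\limits_{i=1}^n P_i(A_S)$ holds unconditionally, (ii) $\mathop{||}\limits_{i=1}^n P_i(A_S)\prec A_S$ holds iff $DC1$--$DC3$, and (iii) the two simulation relations are mutual inverses iff $DC4$ (via the characterization that each $P_i(A_S)$ is bisimilar to a deterministic automaton) --- whereas you propose a single explicit relation that is a simulation in both directions at once. That reorganization is legitimate in principle, but your candidate relation has a concrete defect: it contains only the ``diagonal'' product states $([q]_1,\ldots,[q]_n)$ for $q$ reachable in $A_S$. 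Because the natural projections are in general nondeterministic, a component $[q]_i$ may have several distinct $e$-successors in $P_i(A_S)$, and the interleaving of different witness branches lets $\mathop{||}\limits_{i=1}^n P_i(A_S)$ reach tuples that are not of the form $([q']_1,\ldots,[q']_n)$ for any single $q'$. The backward half of your simulation check then leaves the relation $R$ and cannot close. Your appeal to $DC4$ to say the reached product state ``really is paired with a unique $q'$'' conflates two things: $DC4$ guarantees only that the distinct $e$-successors $x_1\neq x_2$ enable the same continuation strings ($\delta_i(x_1,t)!\Leftrightarrow\delta_i(x_2,t)!$); it does not make them the same state, nor the same equivalence class. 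The paper sidesteps this by taking the backward relation to be \emph{all} reachable product states paired with the $A_S$-states reached by the corresponding interleaving strings (its $\Gamma_1,\Gamma_2,\Gamma_3$ case analysis), and by handling the inverse-relation requirement separately through $DC4$ and Lemma~9 of the two-agent paper.

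A second, smaller point: in your necessity argument for $DC1$ you say the two events ``each belong to at most one agent's alphabet.'' That is not what the violation of $DC1$ gives you --- each of $e_1,e_2$ may be shared by several agents; what matters is only that no single $E_i$ contains both, which is enough to conclude that the product enables both orderings (each agent participates in at most one of the two synchronizations). The conclusion survives, but the phrasing would need repair. The rest of your necessity sketch (contraposition for each condition, matching the paper's contradiction arguments) and your forward simulation step are sound; the work you would still have to do is exactly the paper's Lemma on $\mathop{||}\limits_{i=1}^n P_i(A_S)\prec A_S$: showing that every interleaving string generated by the product --- including those mixing distinct branches of $A_S$ that synchronize on a shared event --- is executable in $A_S$, which is where $DC3$ (not $DC4$) does the branch-alignment you describe.
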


\begin{proof}
In order for $A_S\cong \mathop {||}\limits_{i = 1}^n P_i \left( {A_S
} \right)$, from the definition of bisimulation, it is required to
have $A_S\prec \mathop {||}\limits_{i = 1}^n P_i \left( {A_S }
\right)$; $\mathop {||}\limits_{i = 1}^n P_i \left( {A_S }
\right)\prec A_S$, and the simulation relations are inverse of each
other. These requirements are provided by the following three
lemmas.

Firstly, $\mathop {||}\limits_{i = 1}^n P_i \left( {A_S } \right)$
always simulates $A_S$. Formally:
\begin{lemma}\label{always As is similar to parallel decomposition}
Consider a deterministic automaton $A_S  = \left( {Q,q_0 ,E =
\bigcup\limits_{i = 1}^n {E_i ,\delta } } \right)$ and natural
projections $P_i$, $i=1,...,n$.
Then, it always holds that $A_S \prec \mathop {||}\limits_{i = 1}^n
P_i \left( {A_S } \right)$.
\end{lemma}

The similarity of $\mathop {||}\limits_{i = 1}^n P_i \left( {A_S }
\right)$ to $A_S$, however, is not always true
(see Example \ref{undecomposable automata}), and needs some
conditions as stated in the following lemma.

\begin{lemma}\label{Similarity of parallel decomposition to As}
Consider a deterministic automaton $A_S  = \left( {Q,q_0 ,E =
\bigcup\limits_{i = 1}^n {E_i ,\delta } } \right)$ and natural
projections $P_i$, $i=1,...,n$.
Then, $\mathop {||}\limits_{i = 1}^n P_i \left( {A_S } \right)\prec
A_S$ if and only if
\begin{itemize}
\item $DC1$: $\forall e_1,
e_2 \in E, q\in Q$: $[\delta(q,e_1)!\wedge \delta(q,e_2)!]\\
\Rightarrow [\exists E_i\in\{E_1, \ldots, E_n\}, \{e_1,
e_2\}\subseteq E_i]\vee[\delta(q, e_1e_2)! \wedge \delta(q,
e_2e_1)!]$;
\item $DC2$: $\forall e_1, e_2 \in E,  q\in Q$, $s\in E^*$: $[\delta(q,
e_1e_2s)!\vee \delta(q, e_2e_1s)!]\\ \Rightarrow [\exists
E_i\in\{E_1, \ldots, E_n\}, \{e_1, e_2\}\subseteq E_i]\vee [
\delta(q, e_1e_2s)!\wedge \delta(q, e_2e_1s)!]$;
\item $DC3$:
 $\forall q, q_1, q_2 \in Q$, strings $s, s^{\prime}$ over $E$, $\delta(q, s)= q_1$, $\delta(q,
s^{\prime})= q_2$, $\exists i, j \in \{1, \cdots, n\}$, $i \neq j$, $p_{E_i\cap E_j}(s)$, $p_{E_i\cap E_j}(s^{\prime})$ start with $a\in E_i\cap E_j$: $\mathop {||}\limits_{i = 1}^n P_i \left( {A} \right)\prec
A_S(q)$ (where $A:=\xymatrix@R=0.1cm{
                \ar[r]&  \bullet\ar[r]^{s}\ar[dr]_{s^{\prime}}&  \bullet \\
                &&\bullet    }$ and $A_S(q)$ is
                 an automaton that is obtained from $A_S$, starting from $q$).
 \end{itemize}
\end{lemma}

Next, we need to show that for two simulation relations $R_1$ (for
$A_S \prec \mathop {||}\limits_{i = 1}^n P_i \left( {A_S } \right)$)
and $R_2$ (for $\mathop {||}\limits_{i = 1}^n P_i \left( {A_S }
\right)\prec A_S$) defined by the above two lemmas, $R_1^{-1} =
R_2$.

\begin{lemma}\label{symmetric}
Consider an automaton $A_S=(Q, q_0, E=E_1\cup E_2, \delta)$ with
natural projections $P_i$, $i=1,...,n$. If $A_S$ is deterministic,
$A_S\prec \mathop {||}\limits_{i = 1}^n P_i \left( {A_S } \right)$
with the simulation relation $R_1$ and $\mathop {||}\limits_{i =
1}^n P_i \left( {A_S } \right)\prec A_S$ with the simulation
relation $R_2$, then $R^{-1}_1 = R_2$ (i.e., $\forall q\in Q$, $z\in
Z$: $(z, q)\in R_2 \Leftrightarrow (q, z)\in R_1$) if and only if
$DC4$: $\forall i\in\{1,...,n\}$, $x, x_1, x_2 \in Q_i$, $x_1\neq
x_2$, $e\in E_i$, $t\in E_i^*$, $x_1\in\delta_i (x, e)$,
$x_2\in\delta_i (x, e)$: $\delta_i (x_1, t)! \Leftrightarrow
\delta_i(x_2, t)!$.
\end{lemma}

Now, Theorem \ref{Decomposability Theorem for n agents} is proven as follows.
Firstly, conditions $DC1$ and $DC2$ in Theorem \ref{Decomposability Theorem for n agents}
are equivalent to the respective conditions in Lemma \ref{Similarity of parallel decomposition to As} due to the logical equivalences
$(p\wedge
q)\Rightarrow r \equiv q \Rightarrow (\neg p\vee r)$ and $p
 \Leftrightarrow q \equiv \left( {p \vee q} \right) \Rightarrow
\left( {p \wedge q} \right)$, for any expressions $p$ and $q$.
Then, according to Definition \ref{simulation}, $A_S\cong \mathop
{||}\limits_{i = 1}^n P_i \left( {A_S } \right)$ if and only if
$A_S\prec \mathop {||}\limits_{i = 1}^n P_i \left( {A_S } \right)$
(that is always true due to Lemma \ref{always As is similar to
parallel decomposition}), $\mathop {||}\limits_{i = 1}^n P_i \left(
{A_S } \right)\prec A_S$ (that it is true if and only if $DC1$,
$DC2$ and $DC3$ are true, according to Lemma \ref{Similarity of
parallel decomposition to As}) and $R^{-1}_1 = R_2$ (that for a
deterministic automaton $A_S$, when $A_S\prec \mathop {||}\limits_{i
= 1}^n P_i \left( {A_S } \right)$ with simulation relation $R_1$ and
$\mathop {||}\limits_{i = 1}^n P_i \left( {A_S } \right)\prec A_S$
with simulation relation $R_2$, due to Lemma \ref{symmetric},
$R^{-1}_1 = R_2$ holds true if and only if $DC4$ is satisfied).
Therefore, $A_S\cong \mathop {||}\limits_{i = 1}^n P_i \left( {A_S }
\right)$ if and only if $DC1$, $DC2$, $DC3$ and $DC4$ are satisfied.
\end{proof}

\begin{remark}\label{meaning of DC}
Intuitively, the decomposability condition $DC1$ means that for any
decision on the selection between two transitions there should exist
at least one agent that is capable of the decision making, or the
decision should not be important (both permutations in any order be
legal). $DC2$ says that for any decision on the order of two
successive events before any string, either there should exist at
least one agent capable of such decision making, or the decision
should not be important, i.e., any order would be legal for
occurrence of that string. The condition $DC3$ means that the
interleaving of strings from local task automata that share the
first appearing shared event ($p_{E_i  \cap E_j } \left( s \right)$
and $p_{E_i  \cap E_j } \left( s^{\prime} \right)$  start with the
same
 event $a \in E_i\cap E_j$), should not allow a string that is
not allowed in the original task automaton. In other words, $DC3$ is
to ensure that an illegal behavior (a string that does not appear in
$A_S$) is not allowed by the team (does not appear in $\mathop
{||}\limits_{i = 1}^n P_i \left( {A_S } \right)$). The last
condition, $DC4$, deals with the nondeterminism of local automata.
Here, $A_S$ is deterministic, whereas $P_i \left( {A_S } \right)$
could be nondeterministic. $DC4$ ensures the determinism of
bisimulation quotient of local task automata, in order to guarantee
that the simulation relations from $A_S$ to $\mathop {||}\limits_{i
= 1}^n P_i \left( {A_S } \right)$ and vice versa are inverse of each
other. By providing this property, $DC4$ guarantees that a legal
behavior (appearing in $A_S$) is not disabled by the team (appears
in $\mathop {||}\limits_{i = 1}^n P_i \left( {A_S } \right)$).
\end{remark}

Example \ref{hierarchical algorithm is sufficient} showed a decomposable automaton. Following example illustrate the
automata that are indecomposable due to violation of one of the
decomposability conditions $DC1$-$DC4$, respectively, although
satisfy other three conditions.
\begin{example}\label{undecomposable automata}
The automata
$A_1$: \xymatrix@R=0.1cm{
             \ar[r]&  \bullet \ar[dr]_{e_2} \ar[r]^{e_1} \ar[ld]^{e_3}&\bullet  \\
             \bullet&& \bullet }
             with local event sets $E_1 = \{e_1, e_3\}$,
             $E_2 = \{e_2\}$,  $E_3 = \{e_3\}$;
             $A_2$: \xymatrix@R=0.1cm{
                \ar[r]&  \bullet\ar[dr]_{e_2} \ar[r]^{e_1}&  \bullet \ar@(lu,ru)^{e_2}\ar[r]^{a}&  \bullet \\
             && \bullet \ar[r]^{e_1}&  \bullet \ar[r]^{a}&  \bullet}
             with $E_1 = \{a, e_1\}$, $E_2 = \{a, e_2\}$;
$A_3$: \xymatrix@R=0.5cm{ &   \bullet\ar[r]^{e_2}& \bullet
\ar[r]^{a}& \bullet \ar[r]^{b}& \bullet \\
\ar[r]&  \bullet \ar[u]^{e_1} \ar[d]_{e_2}\ar[r]^{a}& \bullet \ar[r]^{b}& \bullet \ar[r]^{e_2}& \bullet \\
 &   \bullet\ar[r]^{e_1}& \bullet \ar[r]^{a}& \bullet \ar[r]^{b}& \bullet
                }
with $E_1=\{a, b,  e_1\}$, $E_2=\{a, b, e_2\}$, $E_3=\{b\}$, and
    $A_4$: \xymatrix@R=0.1cm{
&  & \bullet\ar[r]^{e_2}& \bullet \ar[r]^{b}& \bullet \\
\ar[r]&  \bullet \ar[ur]^{a} \ar[dr]_{e_1} & &  &               \\
 &  & \bullet\ar[r]^{a}& \bullet \ar[r]^{e_2}& \bullet \ar[r]^{b}& \bullet \ar[r]^{e_3}& \bullet
                }  with $E = E_1\cup E_2\cup E_3$, $E_1 = \{a, b, e_1, e_2, e_3\}$,
                $E_2 = E_3 =\{a, b, e_2, e_3\}$ are not decomposable as
they respectively do not satisfy $DC1$, $DC2$, $DC3$ and $DC4$, while fulfill other three conditions.
\end{example}

\begin{remark}\label{finite string}(Decidability of the conditions)
Since this work deals with finite state automata, the expression
$s\in E^*$ in the decomposability conditions can be checked over
finite states as follows.

The first condition $DC1$ involves no expression ``$s\in E^*$'', and
hence, can be checked over the finite number of states and
transitions. According to the definition, the second condition $DC2$: $\forall e_1, e_2 \in E, q
\in Q, s \in E^* ,\forall E_i \in \left\{ {E_1 ,...,E_n }
\right\},\left\{ {e_1 ,e_2 } \right\} \not\subset E_i $: $\delta
\left( {q,e_1 e_2 s} \right)! \Leftrightarrow \delta \left( {q,e_2
e_1 s}
 \right)!$;
 (or $DC2$: $\forall e_1, e_2 \in E,  q\in Q$, $s\in E^*$: $[\delta(q,
e_1e_2s)!\vee \delta(q, e_2e_1s)!] \Rightarrow [\exists E_i\in\{E_1,
\ldots, E_n\}, \{e_1, e_2\}\subseteq E_i]\vee [ \delta(q,
e_1e_2s)!\wedge \delta(q, e_2e_1s)!]$) can be checked by showing the
existence of a relation $\hat{R}_2$ on the states reachable from
$\delta(q, e_1e_2)$ and $\delta(q, e_2e_1)$ as $(\delta(q, e_1e_2),
\delta(q, e_2e_1))\in \hat{R}_2$, $\forall (q_1, q_2) \in
\hat{R}_2$, $e\in E$:
\begin{enumerate}
\item $\delta(q_1, e) = q_1^{\prime}\Rightarrow \exists q_2^{\prime}\in
Q$, $\delta(q_2, e) = q_2^{\prime}$, $(q_1^{\prime}, q_2^{\prime})
\in \hat{R}_2$, and
\item $\delta(q_2, e) = q_2^{\prime}\Rightarrow \exists q_1^{\prime}\in
Q$, $\delta(q_1, e) = q_1^{\prime}$, $(q_1^{\prime}, q_2^{\prime})
\in \hat{R}_2$.
\end{enumerate}
For instance, $A_2$ in Example \ref{undecomposable automata}
violates $DC2$ as $(\delta(q_0, e_1e_2), \delta(q_0, e_2e_1)) \in
\hat{R}_2$, $\exists e_2\in E$, $\delta(\delta(q_0, e_1e_2), e_2)!$,
but $\neg\delta(\delta(q_0, e_2e_1), e_2)!$.

Checking $DC3$ also can be done over finite states by corresponding
the pairs of strings $s, s^{\prime}$ such that $\exists q, q_1, q_2 \in Q$, $\delta(q, s)= q_1$, $\delta(q,
s^{\prime})= q_2$, $\exists i, j \in \{1, \cdots, n\}$, $i \neq j$, $p_{E_i\cap E_j}(s)$, $p_{E_i\cap E_j}(s^{\prime})$ start with $a\in E_i\cap E_j$, and then forming $A:=\xymatrix@R=0.1cm{
                \ar[r]&  \bullet\ar[r]^{s}\ar[dr]_{s^{\prime}}&  \bullet \\
                &&\bullet}$ and $A_S(q)$ ( an automaton that is obtained from $A_S$, starting from $q$).
and checking
 $\mathop {||}\limits_{i = 1}^n P_i \left( {A} \right)\prec
A_S(q)$. For example, consider $A_3$ in Example \ref{undecomposable automata} and let $s_1$, $s_2$ and $s_3$ denote its strings from top to bottom. This automaton is not decomposable since $\mathop {||}\limits_{i = 1}^n P_i \left( {A} \right)\nprec
A_S(q_0)$ for $A:=\xymatrix@R=0.1cm{
                \ar[r]&  \bullet\ar[r]^{s_1}\ar[dr]_{s_2}&  \bullet \\
                &&\bullet}$.  Here, $s_1$ and $s_2$ synchronize on $a\in E_1\cap E_2$ and generate a new string $e_1abe_2$ in $\mathop {||}\limits_{i = 1}^n P_i \left( {A} \right)$ that does not appear in $A_S$.
The fourth condition ($DC4$: $\forall i\in\{1,...,n\}$, $x, x_1, x_2
\in Q_i$, $x_1\neq x_2$, $e\in E_i$, $t\in E_i^*$, $x_1\in\delta_i
(x, e)$, $x_2\in\delta_i (x, e)$: $\delta_i (x_1, t)!
\Leftrightarrow \delta_i(x_2, t)!$) also can be checked over finite
states, by checking the existence of a relation $\hat{R}_4$ on the
states reachable from $x_1$ and $x_2$ as $(x_1, x_2)\in \hat{R}_4$,
$\forall (x_3, x_4) \in \hat{R}_4$, $e\in E$:
\begin{enumerate}
\item $x_3^{\prime}\in\delta_i(x_3, e)\Rightarrow \exists x_4^{\prime}\in
Q_i$, $x_4^{\prime}\in \delta_i(x_4, e)$, $(x_3^{\prime},
x_4^{\prime}) \in \hat{R}_4$, and
\item $x_4^{\prime}\in\delta_i(x_4, e)\Rightarrow \exists x_3^{\prime}\in
Q_i$, $x_3^{\prime}\in\delta_i(x_3, e)$, $(x_3^{\prime},
x_4^{\prime}) \in \hat{R}_4$.
\end{enumerate} Definition of this relation is a direct implication of $DC4$ that requires identical strings after any nondeterministic transition in any local automaton. For example, the task automaton $A_4$ in Example \ref{undecomposable automata} does not satisfy $DC4$, as for
$P_2(A_S)\cong P_3(A_S)\cong \xymatrix@R=0.1cm{
&  & *+[o][F]{y_1}\ar[r]^{e_2}& *+[o][F]{y_2} \ar[r]^{b}& *+[o][F]{y_3} \\
\ar[r]&  *+[o][F]{y_0} \ar[ur]^{a} \ar[dr]_{a}  \\
 &  &  *+[o][F]{y_4} \ar[r]^{e_2}& *+[o][F]{y_5} \ar[r]^{b}& *+[o][F]{y_6} \ar[r]^{e_3}& *+[o][F]{y_7}
                }$,\\ $\hat{R}_4 = \{(y_1, y_4), (y_2, y_5), (y_3,
                y_6)\}$, $(y_3,
                y_6)\in \hat{R}_4$, $\exists e_3\in E$, $\delta_2(y_6, e_3)!$,
                but $\neg\delta_2(y_3, e_3)!$.

\end{remark}

\begin{remark}\label{Complexity}(Complexity)
Let $|Q^{\prime}|$ be the
summation of number of states in two longest branches of $A_S$ and $|Q|$, $|E|$ and $n$
denote the size of the state space, the size of the event set and
the number of agents (number of local event sets), respectively.

The complexity of $DC1$ is of the order of $|E|^2\times |Q|$, as the pairs of events have to be checked (
$O\left( {\left( {\begin{array}{*{20}c}
   {\left| E \right|}  \\
   2  \\
\end{array}} \right)} \right) = O\left( {\frac{{\left| E \right|!}}{{2\left( {\left| E \right| - 2} \right)!}}} \right) = O\left( {\frac{{\left| E \right|\left( {\left| E \right| - 1} \right)}}{2}} \right) \approx O\left( {\left| E \right|^2 } \right)$) from each state ($|Q|$). Complexity of $DC2$ is calculated as the order of $|E|^2\times |Q| \times |\delta|= |E|^3\times |Q|^3$ as investigating pairs of events from each state is of the order of $|E|^2\times |Q|$ as discussed for $DC1$ and the cardinality of the relation $\delta$ in the worst case is $|\delta|_{max} = |Q|\times |E|\times |Q|$ due to the checking of  events from pairs of states in
$\hat{R}_2$.
The complexity of $DC3$ on the other hand is of the order of $(n\times |E|\times |Q^{\prime}|+|Q^{\prime}|^n\times |E|+|Q^{\prime}|^{2n}\times |E|)\times |E|^2\approx  |Q^{\prime}|^{2n}\times |E|^3$, where $n\times |E|\times |Q^{\prime}|$ is for the natural projections; $|Q^{\prime}|^n\times |E|$ is because of parallel compositions; $|Q^{\prime}|^{2n}\times |E|$ is for checking $\mathop {||}\limits_{i = 1}^n P_i \left( {A } \right)\prec A_S(q)$, and $|E|^2$ is due to picking the pairs of strings as it was discussed for $DC1$. Finally, $DC4$ has the complexity of the order of $n\times |E|\times |Q|+ n\times |Q|^2\times |E|\approx n\times |Q|^2\times |E|$, where the first term is due to checking of each event from each state in each agent, and the second one comes from the checking of each event from pairs of states for each agent in $\hat{R}_2$.

The complexity of the direct method for decomposability, i.e., obtaining the natural projections, doing parallel composition and comparing with the original automaton, has the order of $ n\times|E|\times |Q| + |Q|^n\times |E|+|Q|\times |E|\times |Q|+|Q|^n\times |E|\times |Q|^n+|Q|\times |Q|^n\approx |Q|^{2n}\times |E|$, where the first term is due to the natural projection for each agent, the second one because of parallel composition, the third and fourth terms for checking the simulation relations $A_S\prec \mathop {||}\limits_{i = 1}^n P_i \left( {A_S }
\right)$ and $\mathop {||}\limits_{i = 1}^n P_i \left( {A_S }
\right)\prec A_S$, and the last term is for checking that the simulation relations are inverse of each
other.

Therefore, the complexity of the proposed method is
$|Q^{\prime}|^{2n}\times |E|^3$ while the complexity of the
method with constructing the parallel composition of the natural
projections and checking the bisimilarity with the initial automaton
is of the order $|Q|^{2n}\times |E|$.
In practice, $|Q^{\prime}|\ll |Q|$ and hence for large scale systems with a big $n$, the proposed method yields less complexity.
\end{remark}

More importantly, the proposed method provides some guideline on the
structure of the global specification automaton and the distribution
the events among the agents in order for decomposability.

\begin{remark}\label{Hints}(Insights on enforcing the
decomposability conditions) The result in Theorem
\ref{Decomposability Theorem for n agents} provides us some hints
for ruling out indecomposable task automata and for enforcing the
violated decomposability conditions. For example, if $\exists e_1,
e_2 \in E, q\in Q$: $[\delta(q,e_1)!\wedge \delta(q,e_2)!]$ but
neither $\exists E_i\in\{E_1, \ldots, E_n\}, \{e_1, e_2\}\subseteq
E_i$ nor $\delta(q, e_1e_2)! \wedge \delta(q, e_2e_1)!$, then $A_S$
is not decomposable due to the violation of $DC1$. To remove this
violation there should exist an agent with local event set
$E_i\in\{E_1, \ldots, E_n\}$ such that $\{e_1, e_2\}\subseteq E_i$.
For instance, for $A_1$ in
Examples \ref{undecomposable automata} if
$E_2 = \{e_1, e_2\}$ and $E_3 = \{e_2, e_3\}$, then $DC1$ was satisfied.
This solution also works for an indecomposability of $A_S$ due to a
violation of $DC2$ where $\exists e_1, e_2 \in E,  q\in Q$, $s\in
E^*$: $\delta(q, e_1e_2s)!\vee \delta(q, e_2e_1s)!$ but neither
$\exists E_i\in\{E_1, \ldots, E_n\}, \{e_1, e_2\}\subseteq E_i$ nor
$\delta(q, e_1e_2s)!\wedge \delta(q, e_2e_1s)!$.  Violation of other two conditions,
$DC3$ and $DC4$, is caused due to synchronization of two different
branches $s$ and $s^{\prime}$ from different local task automata,
say $P_i(A_S)$ and $P_j(A_S)$, on a common event $a\in E_i\cap E_j$.
This synchronization may impose an ambiguity in understanding of
$A_S$, when $P_i(A_S)$ and $P_j(A_S)$ synchronize on $a$. If one
string in $P_i(A_S)$ after synchronization on $a$, continues to
another string in $P_j(A_S)$ and this interleaving generates a new
string in $\mathop {||}\limits_{i = 1}^n P_i \left( {A_S } \right)$
that does not appear in $A_S$, then $DC3$ is dissatisfied, whereas
if this interleaving causes that a string in $A_S$ cannot be
completed in $\mathop {||}\limits_{i = 1}^n P_i \left( {A_S }
\right)$, then $DC4$ is violated. $DC4$ can be also violated due to
a nondeterminism on a private event in a local automaton, which
again causes an ambiguity in the collective task $\mathop
{||}\limits_{i = 1}^n P_i \left( {A_S } \right)$. One way to remove
this ambiguity is therefore by introducing the first events in $s$
and $s^{\prime}$ to both $E_i$ and $E_j$. In this case the
synchronization on event $a$ will only occur on the projections of
identical strings from $A_S$ and also it avoids the nondeterminism
in local automata. For example, the task automaton $A_S$:
\xymatrix@R=0.3cm{
                \ar[r]&  \bullet\ar[d]_{a} \ar[r]^{e_1}&  \bullet \ar[r]^{a}&  \bullet  \\
             & \bullet \ar[r]^{e_2}\ar[dr]_{e_3}&  \bullet \ar[r]^{e_3}&  \bullet\\
             &&\bullet \ar[ur]_{e_2}}, with local event sets $E_1 =
             \{a, e_1, e_3\}$ and $E_2 =
             \{a, e_2\}$ satisfies $DC1$ and $DC2$, but violates
             $DC3$ and $DC4$, and hence is not decomposable as the
             parallel composition of $P_1(A_S)\cong \xymatrix@R=0.3cm{
                \ar[r]&  \bullet\ar[dr]_{a} \ar[r]^{e_1}&  \bullet \ar[r]^{a}&  \bullet \\
             && \bullet \ar[r]^{e_3}&  \bullet }$, and $P_2(A_S)\cong \xymatrix@R=0.3cm{
                \ar[r]&  \bullet\ar[d]_{a} \ar[r]^{a}&  \bullet  \\
             &\bullet \ar[r]^{e_2}&  \bullet }$, is
             $P_1(A_S)||P_2(A_S)\cong \xymatrix@R=0.3cm{
             \bullet& \bullet \ar[l]_{a} \ar[d]^{a}& \check{\bullet}
             \ar[l]_{e_1}\ar[r]^{a}\ar[d]^{a}& \bullet \ar[r]^{e_2}\ar[dr]_{e_3}&\bullet
             \ar[r]^{e_3}&\bullet\\
             \bullet& \bullet \ar[l]_{e_2} &\bullet \ar[r]^{e_3} &\bullet &\bullet\ar[ur]_{e_2}
             } \ncong A_S$.
Now, inclusion of $e_1$ in $E_2$ leads to $P_2(A_S)\cong
\xymatrix@R=0.3cm{
                \ar[r]&  \bullet\ar[dr]_{a} \ar[r]^{e_1}&  \bullet \ar[r]^{a}&  \bullet \\
             && \bullet \ar[r]^{e_2}&  \bullet }$ and makes $A_S$
             decomposable.
\end{remark}

Once the task is decomposed into local tasks and the local
controllers are designed for each local plant, the next question is
guaranteeing the global specification, provided each local closed
loop system satisfies its corresponding local specification.

The cooperative tasking result can be now presented as follows.
\begin{theorem}\label{Top-down}
Consider a plant, represented by a parallel distributed system
$\overset{n}{\underset{i=1}{\parallel} }A_{P_i}$, with given local
event sets $E_i$, $i=1,...,n$, and let the global specification is
given by a deterministic task automaton $A_S$, with
$E=\overset{n}{\underset{i=1}{\cup} }E_i$. Then, designing local
controllers $A_{C_i}$, so that $A_{C_i}\parallel A_{P_i}\cong
P_i(A_S)$, $i=1,\cdots, n$, derives the global closed loop system to
satisfy the global specification $A_S$, in the sense of
bisimilarity, i.e., $\overset{n}{\underset{i=1}{\parallel}
}(A_{C_i}\parallel A_{P_i})\cong A_S$, provided $DC1$, $DC2$, $DC3$
and $DC4$ for $A_S$.
\end{theorem}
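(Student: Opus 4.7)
The plan is to chain three bisimilarities together via transitivity. First, by hypothesis $A_{C_i}\parallel A_{P_i}\cong P_i(A_S)$ for each $i\in\{1,\ldots,n\}$. Second, I want to lift these pointwise bisimilarities through the parallel product, obtaining
\[
\mathop{\parallel}_{i=1}^n (A_{C_i}\parallel A_{P_i})\;\cong\;\mathop{\parallel}_{i=1}^n P_i(A_S).
\]
Third, since $DC1$--$DC4$ hold for $A_S$, Theorem~\ref{Decomposability Theorem for n agents} gives $\mathop{\parallel}_{i=1}^n P_i(A_S)\cong A_S$. Because bisimilarity is an equivalence relation (transitivity is used in Remark~\ref{Bisimilar instead of equal}), composing these three steps yields $\mathop{\parallel}_{i=1}^n (A_{C_i}\parallel A_{P_i})\cong A_S$, which is the claim.

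The only nontrivial ingredient, and therefore where the main work lies, is the middle step: the congruence property of parallel composition with respect to bisimilarity. I would reduce this to the binary case using the associativity of $\parallel$ recorded in Definition~\ref{parallel composition} and then induct on $n$. So it suffices to show that if $B_1\cong C_1$ via a bisimulation $R_1$ and $B_2\cong C_2$ via a bisimulation $R_2$, then $B_1\parallel B_2\cong C_1\parallel C_2$. The natural candidate relation is the componentwise product
\[
R\;=\;\{\,((b_1,b_2),(c_1,c_2))\mid (b_1,c_1)\in R_1,\ (b_2,c_2)\in R_2\,\}.
\]
Verifying that $R$ is a simulation proceeds by a case split on whether the firing event $e$ lies in $E_1\cap E_2$, $E_1\setminus E_2$, or $E_2\setminus E_1$: in each case the transition rule of Definition~\ref{parallel composition} forces the corresponding transitions in the components, which can be matched using $R_1$ and/or $R_2$, and the resulting target pair lies in $R$. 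The initial-state clause follows from $(b_1^0,c_1^0)\in R_1$ and $(b_2^0,c_2^0)\in R_2$. The symmetric direction, together with $R^{-1}=R_1^{-1}\times R_2^{-1}$, uses the fact that $R_1^{-1}$ and $R_2^{-1}$ are themselves bisimulations, so $R$ is a bisimulation.

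The hardest bookkeeping, as usual with parallel composition, is making sure the three event-location cases are handled consistently and that the inverse relation argument really gives the bisimilarity (as opposed to mere two-way simulation); here the product construction automatically satisfies $R^{-1}=R_1^{-1}\times R_2^{-1}$, so no extra condition analogous to $DC4$ is needed at this stage, because that condition has already been used inside Theorem~\ref{Decomposability Theorem for n agents} to produce the third link of the chain. Once the binary congruence is in hand, the induction on $n$ is immediate from associativity, and the theorem follows by the transitivity argument above.
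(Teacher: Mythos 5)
Your proposal is correct and follows essentially the same route as the paper: the paper also chains $\mathop{\parallel}_{i=1}^n (A_{C_i}\parallel A_{P_i})\cong \mathop{\parallel}_{i=1}^n P_i(A_S)\cong A_S$ using associativity of $\parallel$, the congruence of $\parallel$ with respect to bisimilarity, and Theorem~\ref{Decomposability Theorem for n agents}. The only difference is that the paper simply cites the congruence property as Lemma~\ref{ParallelBiSimulation} (imported from the earlier two-agent paper), whereas you supply the standard product-relation proof of it.
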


\begin{proof}
Following two lemmas are used
during the proof.
\begin{lemma}\label{associativity-parallel}(Associativity
of parallel composition \cite{Cassandras2008}) $P_1(A_S)\parallel
P_2(A_S) \parallel \cdots
\parallel
P_{n-1}(A_S)\parallel P_n(A_S)\cong P_n(A_S)\parallel
\left(P_{n-1}(A_S)
\parallel \left( \cdots \parallel \left( P_2(A_S)\parallel
P_1(A_S)\right)\right)\right)$.
\end{lemma}

\begin{lemma}\label{ParallelBiSimulation}\cite{Automatica2010-2-agents-decomposability}
If two automata $A_2$ and $A_4$ (bi)simulate, respectively, $A_1$
and $A_3$, then $A_2\parallel \  A_4$ (bi)simulates $A_1\parallel \
A_3$, i.e.,
\begin{enumerate}
\item $\left(A_1\prec A_2 \right)\wedge \left(A_3\prec
A_4\right)\Rightarrow \left(A_1\parallel \ A_3\prec A_2\parallel \
A_4\right)$;
\item $\left(A_1\cong A_2 \right)\wedge \left(A_3\cong
A_4\right)\Rightarrow \left(A_1\parallel \ A_3\cong A_2\parallel \
A_4\right)$.
\end{enumerate}
 \end{lemma}

Now, satisfying $DC1$-$DC4$ for $A_S$, according to Theorem
\ref{Decomposability Theorem for n agents}, leads to decomposability
of $A_S$ into local task automata $P_i(A_S)$, $i=1,...,n$, such that
$A_S \cong \overset{n}{\underset{i=1}{\parallel} }P_i(A_S)$. Then,
choosing local controllers $A_{C_i}$, so that $A_{C_i}\parallel
A_{P_i}\cong P_i(A_S)$, $i=1,2,\cdots,n$, due to Lemmas $\ref{associativity-parallel}$ and
$\ref{ParallelBiSimulation}.2$, results in
$\overset{n}{\underset{i=1}{\parallel} }(A_{C_i}\parallel
A_{P_i})\cong \overset{n}{\underset{i=1}{\parallel} }P_i(A_S)\cong
A_S$.
\end{proof}

Now, if $DC1$-$DC4$ is reduced to $DC1$-$DC3$ (conditions in Theorem
\ref{Decomposability Theorem for n agents} are reduced into the conditions
in Lemma \ref{Similarity of parallel decomposition to As}), then
$\overset{n}{\underset{i=1}{\parallel} }P_i(A_S)\cong A_S$ is
reduced into $\overset{n}{\underset{i=1}{\parallel} }P_i(A_S)\prec
A_S$, and hence, choosing local controllers $A_{C_i}$, so that
$A_{C_i}\parallel A_{P_i}\prec P_i(A_S)$, $i=1,2,\cdots,n$, due to
Lemmas $\ref{associativity-parallel}$ and  $\ref{ParallelBiSimulation}.1$ leads to
$\overset{n}{\underset{i=1}{\parallel} }(A_{C_i}\parallel
A_{P_i})\prec \overset{n}{\underset{i=1}{\parallel} }P_i(A_S)\prec
A_S$. Therefore,
\begin{corollary}
Considering the plant and global task as stated in Theorem
\ref{Top-down}, if $DC1$-$DC3$ are satisfied, then designing local
controllers $A_{C_i}$, so that $A_{C_i}\parallel A_{P_i}\prec
P_i(A_S)$, $i=1,\cdots, n$, derives the global closed loop system to
satisfy the global specification $A_S$, in the sense of similarity,
i.e., $\overset{n}{\underset{i=1}{\parallel} }(A_{C_i}\parallel
A_{P_i})\prec A_S$.
\end{corollary}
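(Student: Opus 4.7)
The plan is to mirror the proof of Theorem \ref{Top-down}, but replacing bisimilarity ($\cong$) everywhere by one-sided similarity ($\prec$). The key observation is that under $DC1$--$DC3$, Lemma \ref{Similarity of parallel decomposition to As} already yields $\mathop{||}\limits_{i=1}^n P_i(A_S) \prec A_S$, which is the only direction we need here. So the corollary reduces to pushing the local similarities $A_{C_i}\parallel A_{P_i} \prec P_i(A_S)$ through the parallel composition, and then chaining with the above.

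First I would invoke Lemma \ref{Similarity of parallel decomposition to As} to obtain
\[
\mathop{||}\limits_{i=1}^n P_i(A_S)\ \prec\ A_S.
\]
Next, starting from the hypothesis $A_{C_i}\parallel A_{P_i}\prec P_i(A_S)$ for each $i\in\{1,\ldots,n\}$, I would apply Lemma \ref{ParallelBiSimulation}.1 inductively in $n$, using the associativity of parallel composition from Lemma \ref{associativity-parallel} to regroup compositions two at a time. The base case is direct from Lemma \ref{ParallelBiSimulation}.1 applied to $A_{C_1}\parallel A_{P_1}\prec P_1(A_S)$ and $A_{C_2}\parallel A_{P_2}\prec P_2(A_S)$. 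The inductive step uses the regrouping $\mathop{||}\limits_{i=1}^{k+1}(A_{C_i}\parallel A_{P_i}) = (A_{C_{k+1}}\parallel A_{P_{k+1}}) \parallel \mathop{||}\limits_{i=1}^{k}(A_{C_i}\parallel A_{P_i})$ together with the corresponding regrouping for $\mathop{||}\limits_{i=1}^{k+1} P_i(A_S)$, then applies Lemma \ref{ParallelBiSimulation}.1 once more. This yields
\[
\mathop{||}\limits_{i=1}^n (A_{C_i}\parallel A_{P_i})\ \prec\ \mathop{||}\limits_{i=1}^n P_i(A_S).
\]

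Finally, I would invoke transitivity of the simulation preorder (a standard property following directly from Definition \ref{simulation} by composing the two simulation relations) to chain the two similarities and conclude
\[
\mathop{||}\limits_{i=1}^n (A_{C_i}\parallel A_{P_i})\ \prec\ A_S.
\]

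The main obstacle I anticipate is minor and purely bookkeeping: making sure the inductive application of Lemma \ref{ParallelBiSimulation}.1 respects the associativity regrouping required by Lemma \ref{associativity-parallel} so that the event sets on each side match. Everything else is a direct specialization of the argument already used in the proof of Theorem \ref{Top-down}, with the bisimulation version of Lemma \ref{ParallelBiSimulation} replaced by its simulation version, and the full decomposability $\cong$ weakened to the one-sided $\prec$ furnished by Lemma \ref{Similarity of parallel decomposition to As}.
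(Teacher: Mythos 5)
Your proposal is correct and follows essentially the same route as the paper: the paper likewise obtains $\mathop{||}\limits_{i=1}^n P_i(A_S)\prec A_S$ from Lemma \ref{Similarity of parallel decomposition to As} under $DC1$--$DC3$, lifts the local similarities through the composition via Lemmas \ref{associativity-parallel} and \ref{ParallelBiSimulation}.1, and chains by transitivity. Your version merely makes the induction over $n$ explicit, which the paper leaves implicit.
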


In the following example, we recall the task automaton of cooperative multi-robot scenario from \cite{Automatica2010-2-agents-decomposability} (with the correction of robot indices $R_2$, $R_1$ and $R_3$ from right to the left), where the global task automaton has been decomposed into local task automata using a hierarchical approach as a sufficient condition by which the decomposability conditions for $2$ agents are successively used for $n$ agents.
Here, we decompose $A_S$ directly using Theorem
\ref{Decomposability Theorem for n agents}.
\begin{example}(Revisiting Example in Section $5$
for decomposability using Theorem \ref{Decomposability Theorem for n agents}) \label{MRS example-n-agents}
Consider a team of three robots $R_1$, $R_2$ and $R_3$ in
Figure $\ref{MRS-Cooperation}$, initially in Room $1$.
\begin{figure}[ihtp]
      \begin{center}
     \includegraphics[width=0.5\textwidth]{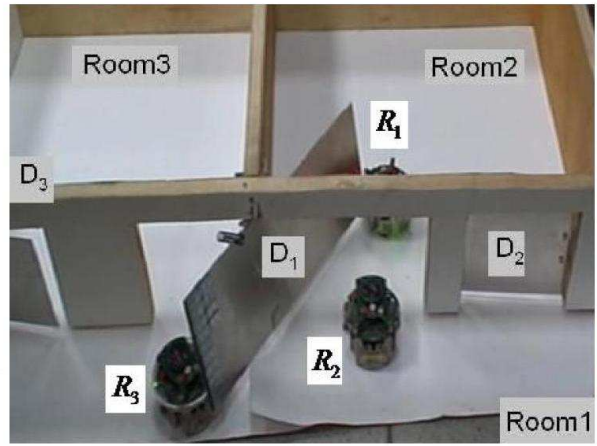}
        \caption{The environment of MRS coordination example.}
         \label{MRS-Cooperation}
        \end{center}
\end{figure}
 All doors are equipped with spring to be
closed automatically, when there is no force to keep them open.
After a help announcement from Room $2$, the Robot $R_2$ is required to go to Room $2$,
urgently from the one-way door $D_2$ and accomplish its task there and come back
immediately to Room $1$ from the two-way, but heavy door $D1$ that needs the cooperation of two robots $R_1$ and
$R_3$ to be opened. To save time, as soon as the robots hear the
help request from Room $2$, $R_2$ and $R_3$ go to Rooms $2$ and $3$,
from $D_2$ and the two-way door $D_3$, respectively, and then $R_1$ and $R_3$
position on $D_1$, synchronously open $D_1$ and wait for the
accomplishment of the task of $R_2$ in Room $2$ and returning to
Room $1$ ($R_2$ is fast enough). Afterwards, $R_1$ and $R_3$ move
backward to close $D_1$ and then $R_3$ returns back to Room $1$ from
$D_3$. All robots then stay at Room $1$ for the next
 task \cite{Automatica2010-2-agents-decomposability}.
These requirements can be translated into a task automaton for the
robot team as it is illustrated in Figure $\ref{global task
automaton}$, defined over local event sets $E_1 = \{h_1, R_1toD_1,
R_1onD_1, FWD, D_1opened, R_2in1, BWD, D_1closed, r\}$, $E_2= \{h_2,
R_2to2, R_2in2, D_1opened, R_2to1, R_2in1, r\}$, and $E_3 = \{h_3$,
$R_3to3$, $R_3in3$, $R_3toD_1$, $R_3onD_1$, $FWD$, $D_1opened$,
$R_2in1$, $BWD$, $D_1closed$, $R_3to1$, $R_3in1$, $r\}$, with
$h_i$:= $R_i$ received help request, $i= 1, 2, 3$; $R_jtoD_1$:=
command for $R_j$ to position on $D_1$, $j= 1, 3$; $R_jonD_1$:=
$R_j$ has positioned on $D_1$, $j= 1,3$; $FWD$:= command for moving
forward (to open $D_1$); $BWD$:= command for moving backward (to
close $D_1$); $D_1opened$:= $D_1$ has been opened;  $D_1closed$:=
$D_1$ has been closed; $r$:= command to go to initial state for the
next implementation; $R_itok$:= command for $R_i$ to go to Room $k$,
and $R_iink$:= $R_i$ has gone to Room $k$, $i = 1, 2, 3$, $k = 1, 2,
3$.

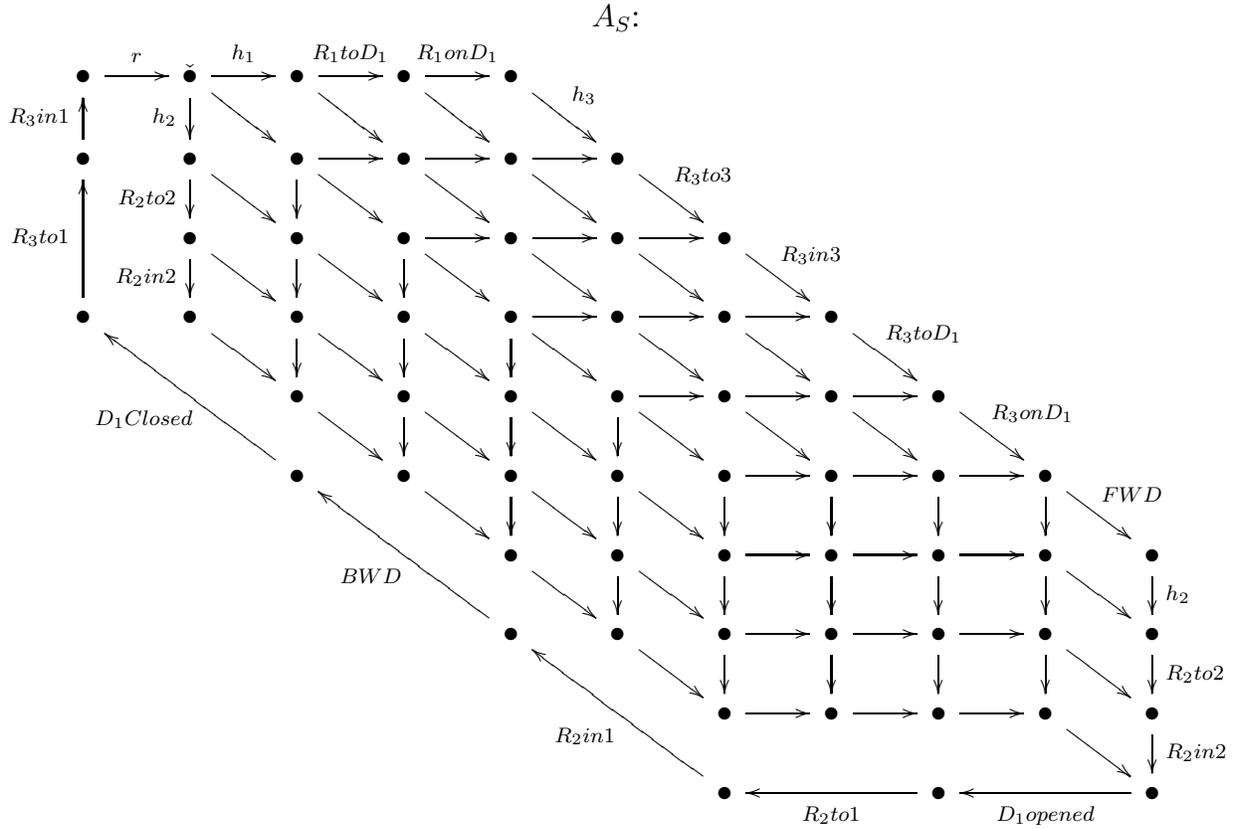
\begin{figure}[ihtp]
      \begin{center}
     $A_S$:\\ \xymatrix@R=0.5cm{
 \bullet  \ar[r]^{r}  & \check{\bullet}\ar[r]^{h_1}\ar[d]_{h_2}\ar[dr]
 &\bullet \ar[r]^{R_1toD_1}\ar[dr] &\bullet \ar[r]^{R_1onD_1}\ar[dr] &\bullet \ar[dr]^{h_3}\\
\bullet \ar[u]^{R_3in1}& \bullet \ar[d]_{R_2to2}\ar[dr]&\bullet
\ar[d]\ar[dr]\ar[r]&\bullet \ar[r]\ar[dr] &\bullet
\ar[r]\ar[dr] &\bullet \ar[dr]^{R_3to3}\\
& \bullet \ar[d]_{R_2in2}\ar[dr] &\bullet \ar[d]\ar[dr]& \bullet
\ar[r]\ar[d]\ar[dr]
 &\bullet \ar[r]\ar[dr]
  &\bullet \ar[r]\ar[dr]
  &\bullet \ar[dr]^{R_3in3}\\
 \bullet
\ar[uu]^{R_3to1}&\bullet\ar[dr]&\bullet\ar[d]\ar[dr]
&\bullet\ar[d]\ar[dr] &\bullet\ar[d]\ar[dr]\ar[r]
&\bullet\ar[r]\ar[dr] &\bullet\ar[r]\ar[dr]
&\bullet\ar[dr]^{R_3toD_1}\\
&&\bullet\ar[dr]&\bullet\ar[d]\ar[dr] &\bullet\ar[d]\ar[dr]
&\bullet\ar[d]\ar[r]\ar[dr] &\bullet\ar[r]\ar[dr]
&\bullet\ar[r]\ar[dr]
&\bullet\ar[dr]^{R_3onD_1}\\
&&\bullet\ar[uull]^{D_1Closed}&\bullet\ar[dr] &\bullet\ar[d]\ar[dr]
&\bullet\ar[d]\ar[dr] &\bullet\ar[d]\ar[r] &\bullet\ar[d]\ar[r]
&\bullet\ar[d]\ar[r]
&\bullet\ar[d]\ar[dr]^{FWD}\\
&&&&\bullet\ar[dr] &\bullet\ar[d]\ar[dr] &\bullet\ar[d]\ar[r]
&\bullet\ar[d]\ar[r] &\bullet\ar[d]\ar[r]
&\bullet\ar[d]\ar[dr] &\bullet\ar[d]^{h_2}\\
&&&&\bullet\ar[uull]^{BWD} &\bullet\ar[dr] &\bullet\ar[d]\ar[r]
&\bullet\ar[d]\ar[r] &\bullet\ar[d]\ar[r]
&\bullet\ar[d]\ar[dr] &\bullet\ar[d]^{R_2to2}\\
 &&&&&&\bullet\ar[r]
&\bullet\ar[r]&\bullet\ar[r]&
\bullet\ar[dr]& \bullet\ar[d]^{R_2in2}\\
&&&&&&\bullet\ar[uull]^{R_2in1}
&&\bullet\ar[ll]^{R_2to1}&&\bullet\ar[ll]^{D_1opened}
                }
        \caption{Task automaton $A_S$ for robot team.}
\label{global task automaton}
        \end{center}
      \end{figure}

To check the decomposability of $A_S$ using Theorem \ref{Decomposability Theorem for n agents},
firstly $DC1$ and $DC2$
are satisfied since for any order/selection on the pairs events, each
from one of the sets $\{h_1, R_1toD_1, R_1onD_1\}\subseteq
E_1\backslash \{E_2\cup E_3\}$, $\{h_2, R_2to2, R_2in2\}\subseteq
E_2\backslash \{E_1\cup E_3\}$ and $\{h_3, R_3to3, R_3in3, R_3toD_1,
R_3onD_1\}\subseteq E_3\backslash \{E_1\cup E_2\}$ and also the
pairs of event $FW$, paired with events from $\{h_2, R_2to2,
R_2in2\}$, the events appear in both orders in the automaton. The rest of
orders/selections on transitions that are not legal in both orders
can be decided by at least one agent, as $\{R_1onD_1, FWD\}\subseteq
E_1$, $\{R_3onD_1, FWD\}\subseteq E_3$, $\{FWD, D_1opened\}\subseteq
E_1$, $\{D_1opened, R_2to1\}\subseteq E_2$, $\{R_2to1,
R_2in1\}\subseteq E_2$, $\{R_2in1, BWD\}\subseteq E_1$, $\{BWD,
D_1closed\}\subseteq E_1$, $\{D_1closed, R_3to1\}\subseteq E_3$,
$\{R_3to1, R_3in1\}\subseteq E_3$, $\{R_3in1, r\}\subseteq E_3$,
$\{r, h_1\}\subseteq E_1$, $\{r, h_2\}\subseteq E_2$, $\{r,
h_3\}\subseteq E_3$. Moreover, since starting from any state, each
shared event $e\in\{FWD, D_1opened, R_2in1, BWD, D_1closed, r\}$
appears in only one branch, $DC3$ is satisfied. Furthermore, $DC4$
is also satisfied since $P_i(A_S)$, $i = 1, 2, 3$ are deterministic
automata. Therefore, according to Theorem \ref{Decomposability
Theorem for n agents}, $A_S$ is decomposable into $P_i(A_S)$, $i =
1, 2, 3$, as illustrated in Figure \ref{The Second
Stage of Decomposition}, bisimulates $A_S$.
\begin{figure}[ihtp]
     $P_1(A_S)$:
      \xymatrix@C=0.5cm{
     \ar[r]&  \bullet \ar[r]^{h_1}&  \bullet\ar[r]_{R_1toD_1}& \bullet\ar[r]^{R_1onD_1}&
     \bullet\ar[r]_{FWD}& \bullet\ar[r]^{D_1opened}& \bullet\ar[r]_{R_2in1}& \bullet\ar[r]^{BWD}&
     \bullet\ar[r]_{D_1closed}& \bullet
\ar`dr_l[llllllll]`_u[llllllll]_{r}[llllllll]
   }\\
$P_2(A_S)$: \xymatrix@C=0.5cm{
     \ar[r]&  \bullet \ar[r]^{h_2} &  \bullet  \ar[r]_{R_2to2}& \bullet \ar[r]^{R_2in2}& \bullet\ar[r]_{D_1opened}&
     \bullet\ar[r]^{R_2to1}& \bullet\ar[r]_{R_2in1}& \bullet
\ar`dr_l[llllll]`_u[llllll]_{r}[llllll]
   }\\
       $P_3(A_S)$:  \xymatrix@C=0.5cm{
     \ar[r]&  \bullet \ar[r]^{h_3}& \bullet \ar[r]_{R_3to3}& \bullet \ar[r]^{R_3in3}&
     \bullet\ar[r]_{R_3toD_1}& \bullet\ar[r]^{R_3onD_1}&
     \bullet\ar[r]_{FWD}& \bullet\ar[r]^{D_1opened}& \bullet\ar[r]_{R_2in1}& \bullet\ar[r]^{BWD}&
     \bullet\ar[r]_{D_1closed}&
     \bullet\ar[r]^{R_3to1}&
     \bullet\ar[r]_{R_3in1}&\bullet
\ar`dr_l[llllllllllll]`_u[llllllllllll]_{r}[llllllllllll]
   }
        \caption{$P_1(A_S)$ for $R_1$; $P_2(A_S)$ for $R_2$ and  $P_3(A_S)$ for $R_3$.}
 \label{The Second Stage of Decomposition}
      \end{figure}
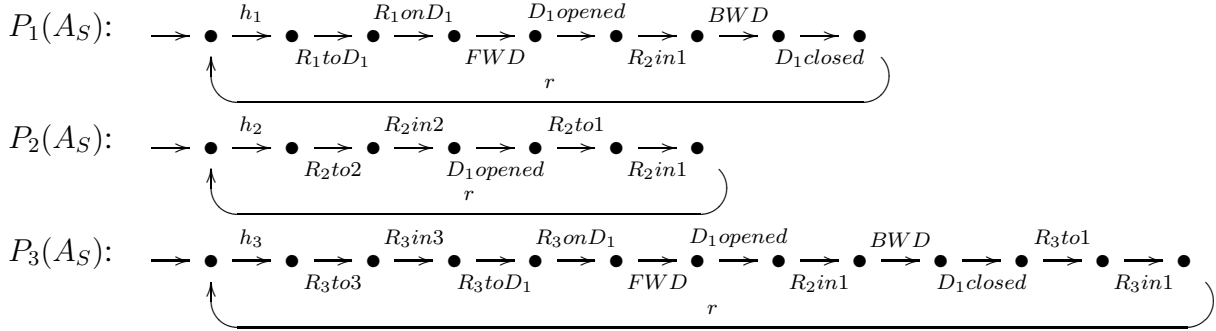
Choosing local controllers $A_{C_i}:= P_i(A_S)$ leads to $A_{C_i}\parallel
A_{P_i}\cong P_i(A_S)$, $i=1, 2, 3$ that according to Theorem \ref{Top-down}
results in $\overset{n}{\underset{i=1}{\parallel} }(A_{C_i}\parallel
A_{P_i})\cong \overset{n}{\underset{i=1}{\parallel} }P_i(A_S)\cong
A_S$, i.e., the team of controlled robots collectively satisfy the global specification $A_S$.
Suppose that $R_1$ does not inform the occupance $D_1opened$ to $R_2$. In that case, there was not exist an agent to decide on the order of event pairs $\{D_1opened, R_2to1\}$ and the task was undecomposable. According to the insight from $DC2$, sharing $D_1opened$ between $R_1$ and $R_2$ makes $A_S$ decomposable.
The scenario has been successfully implemented on a team of three
ground robots. We include a breif version of the example in the paper and, due to the restriction in space,
 the reader are referred to \cite{Automatica2010-2-agents-decomposability} for the description and figures of the scenario and the global task.
\end{example}

\section{CONCLUSION}\label{CONCLUSION}
The paper proposed a formal method for automaton decomposability,
applicable in top-down decentralized cooperative control of
distributed discrete event systems. Given a set of agents whose
logical behaviors are modeled in a parallel distributed system, and
a global task automaton, the paper has the following contributions:
firstly, we provide necessary and sufficient conditions for
decomposability of an automaton with respect to parallel composition
and natural projections into an arbitrary finite number of local
event sets, and secondly, it has been shown that if a global task
automaton is decomposed for individual agents, designing a local
supervisor for each agent, satisfying its local task, guarantees
that the closed loop system of the team of agents satisfies the
global specification.

The proposed decomposability conditions can be applied to the discrete event systems in which all states are marked. The example of such systems include the manufacturing machines with routine tasks, execution of PLC (programmable Logic Controller) systems that the subroutines are visited iteratively, and any other such systems that all states of the system should be visited and hence can be attributed to marked states.
Therefore, future works include the extension of the results for the class of systems with only some of the states as marked states. For this purpose new decomposability conditions have to be developed such that the composition of local automata preserves the marked
states of the global task automaton. Other interesting directions on this topic are the fault-tolerant task decomposition in spite of failure in some events, and decomposabilizability of an
indecomposable task automaton by modifying the event distribution.

\appendices
\section{Definitions}
This part provides some definitions to be used during the proofs of the lemmas in the Appendix.
Firstly, we successive event pair and adjacent
event pair are defined as follows.
\begin{definition}(Successive event pair)
Two events $e_1$ and $e_2$ are called successive events if $\exists
q\in Q: \delta(q,e_1)! \wedge \delta(\delta(q,e_1),e_2)!$ or
$\delta(q,e_2)! \wedge \delta(\delta(q,e_2),e_1)!$.
\end{definition}
\begin{definition}(Adjacent event pair)
Two events $e_1$ and $e_2$ are called adjacent events if $\exists
q\in Q:\delta(q,e_1)! \wedge \delta(q,e_2)!$.
\end{definition}

We will also use synchronized product of languages in the following section, defined as follows.

\begin{definition}(Synchronized product of languages
\cite{Willner1991})\label{Synchronized product of languages}
Consider a global event set $E$ and local event sets $E_i$, $ i = 1,
\ldots, n$, such that $E = \overset{n}{\underset{i=1}{\cup} } E_i$.
For a finite set of languages $\{L_i \subseteq E_i^*\}_{i = 1}^n$,
the synchronized product (product language) of $\{L_i\}$, denoted by
$\overset{n}{\underset{i=1}{|} } L_i$, is defined as
$\overset{n}{\underset{i=1}{|} } L_i = \{s \in E^*|\forall
i\in\{1,\ldots, n\}: p_i(s)\in L_i\} =
\overset{n}{\underset{i=1}{\cap} } p_i^{-1}(L_i)$.
\end{definition}

%
\begin{remark}\label{language of parallel automata-remark}
Using the product language, it is then possible to characterize the
language of parallel composition of two automata $A_1$ and $A_2$,
with respective event sets $E_1$ and $E_2$, in terms of their
languages, as $L(A_1||A_2) = L(A_1)|L(A_2) = p_1^{-1}(L(A_1))\cap
p_2^{-1}(L(A_2))$ with $p_i: (E_1\cup E_2)^* \to E_i^*$, $i = 1, 2$
\cite{Willner1991}. Accordingly, the interleaving of two strings is defined as the product language to their respective automata as
follows. Let $A_1 = (\{q_1,...,q_n\}, \{q_1\}, E_1 =
\{e_1,...,e_n\}, \delta_1)$ and $A_2 =
(\{q_1^{\prime},...,q_m^{\prime}\}, \{q_1^{\prime}\}, E_2 =
\{e_1^{\prime},...,e_m^{\prime}\}, \delta_2)$ denote path automata
(automata with only one branch)
$q_1\overset{e_1}{\underset{}\rightarrow }
q_2\overset{e_2}{\underset{}\rightarrow
}...\overset{e_n}{\underset{}\rightarrow }q_n$ and
$q_1^{\prime}\overset{e_1^{\prime}}{\underset{}\rightarrow }
q_2^{\prime}\overset{e_2^{\prime}}{\underset{}\rightarrow
}...\overset{e_m^{\prime}}{\underset{}\rightarrow }q_m^{\prime}$,
respectively. Then, $L(A_1||A_2) = \bar{s}|\bar{s^{\prime}} =
p_1^{-1}(\bar{s})\cap p_2^{-1}(\bar{s^{\prime}})$ with $s =
e_1,...,e_n$, $s^{\prime} = e_1^{\prime},...,e_m^{\prime}$ and $p_i:
(E_1\cup E_2)^* \to E_i^*$, $i = 1, 2$. Here, $\overline{s}$ denotes the prefix-closure of an string, defined as the set of all prefixes of the string. Formally, if $s$ is the event sequence $s:=e_1e_2...e_n$, then $\overline{s}:= \{\varepsilon, e_1, e_1e_2,..., e_1e_2...e_n\}$.

\begin{example}\label{interleaving-Example}
Consider three strings $s_1 = e_1a$, $s_2 = ae_2$ and $s_3 = ae_1$.
Then the interleaving of $s_1$ and $s_2$ is
$\overline{s_1}|\overline{s_2} = \overline{e_1ae_2}$ while the
interleaving of two strings $s_2$ and $s_3$ becomes
$\overline{s_2}|\overline{s_3} = \overline{\{ae_1e_2, ae_2e_1\}}$.
\end{example}
\end{remark}

\section{Proof for Lemma \ref{always As is similar to parallel
decomposition}}\label{Appendix always As is similar to parallel
decomposition} Recalling Lemma $1$ in
\cite{Automatica2010-2-agents-decomposability}, stating that for a
deterministic automaton $A_S = (Q, q_0, E = E_1\cup E_2, \delta)$,
$A_S \prec P_1(A_S)||P_2(A_S)$, it leads to $P_{\mathop
{\cup}\limits_{i = m}^n E_i}(A_S)\prec P_m(A_S)||P_{\mathop
{\cup}\limits_{i = m+1}^n E_i}(A_S)$, $m = 1,\ldots, n-1$, for $A_S
= (Q, q_0, E = \mathop {\cup}\limits_{i = 1}^n E_i, \delta)$.
Therefore, $A_S \cong  P_{\mathop {\cup}\limits_{i=1}^n
E_i}(A_S)\prec P_1(A_S)||P_{\mathop {\cup}\limits_{i = 2}^n
E_i}(A_S)\prec P_1(A_S)||P_2(A_S)||P_{\mathop {\cup}\limits_{i =
3}^n E_i}(A_S)\prec \ldots \prec \mathop {||}\limits_{i = 1}^n
P_i(A_S)$.

\section{Proof for Lemma \ref{Similarity of parallel decomposition to
As}}\label{Appendix Similarity of parallel decomposition to As}

\textbf{Sufficiency:}
Consider the deterministic automaton $A_S =
(Q, q_0, E, \delta)$. The set of transitions in $\mathop
{||}\limits_{i = 1}^n P_i(A_S) = (Z, z_0, E, \delta_{||})$ is
defined as $T = \{z_0:= (x_0^1,\cdots,x_0^n)\overset{\mathop
{|}\limits_{i = 1}^n \overline{p_i(s_i)}} \longrightarrow z:=
(x_1,\cdots,x_n)\in Z:=\mathop {\prod}\limits_{i = 1}^n Q_i\}$,
where, $(x_0^1,\cdots,x_0^n)\overset{\mathop {|}\limits_{i = 1}^n
\overline{p_i(s_i)}}\longrightarrow(x_1,\cdots,x_n)$ in $\mathop
{||}\limits_{i = 1}^n P_i(A_S)$ is the interleaving of strings
$x_0^i\overset{ p_i(s_i)}\longrightarrow x_i$ in $P_i(A_S)$, $i
=1,\cdots, n$ (projections of $q_0\overset{ s_i }\longrightarrow
\delta(q_0, s_i)$ in $A_S$. Let $\tilde L\left( {A_S } \right) \subseteq L\left( {A_S }
\right)$ denote the largest subset of $L\left( {A_S } \right)$ such that
$\forall s\in \tilde L\left( {A_S } \right), \exists s^{\prime} \in
 \tilde L\left( {A_S } \right)$, $\exists E_i, E_j  \in \left\{ E_1
,...,E_n  \right\},
 i \ne j, p_{E_i  \cap E_j } \left( s \right)$ and $p_{E_i  \cap E_j } \left( s^{\prime} \right)$  start with the same
 event. Then, $T$ can be divided into three sets of
transitions corresponding to a division of $\{\Gamma_1, \Gamma_2,
\Gamma_3\}$ on the set of interleaving strings $\Gamma = \{\mathop
{|}\limits_{i = 1}^n \overline{p_i(s_i)}|s_i \in E^*, q_0\overset{
s_i }\longrightarrow \delta(q_0, s_i)\}$, where, $\Gamma_1 =
\{\mathop {|}\limits_{i = 1}^n \overline{p_i(s_i)}\in \Gamma | s_1,
\cdots, s_n \notin \tilde{L}(A_S), s_1=\cdots = s_n\}$, $\Gamma_2 =
\{\mathop {|}\limits_{i = 1}^n \overline{p_i(s_i)}\in \Gamma | s_1, \cdots, s_n \notin \tilde{L}(A_S),
\exists s_i, s_j \in \{s_1, \cdots, s_n\}, s_i \neq s_j,
\}$, $\Gamma_3 =
\{\mathop {|}\limits_{i = 1}^n \overline{p_i(s_i)}\in \Gamma | s_i
\in \tilde{L}(A_S)\}$. Moreover, since $A_S$ is deterministic, $\mathop {||}\limits_{i = 1}^n P_i(A_S)\prec A_S$ is reduced to $\delta(q_0, \mathop {|}\limits_{i = 1}^n \overline{p_i(s)})!$  in $A_S$ for transitions in $\Gamma$.
$\mathop {||}\limits_{i = 1}^n P_i(A_S)\prec A_S$.

Thus, defining a relation $R$ as $(z_0, q_0)\in
R$, $R:=\{(z, q)\in Z\times Q|\exists t\in E^*, z\in
\delta_{||}(z_0, t)\}$, the aim is to show that $R$ is a simulation
relation from $\mathop {||}\limits_{i = 1}^n P_i(A_S)$ to $A_S$.

For the interleavings in $\Gamma_1$, $\forall z, z_1 \in Z$, $e \in
E$, $z_1\in\delta_{||}(z, e)$: $\exists q, q_1 \in Q$, $\delta (q,
e) = q_1$ such that $\forall z[j]\in\{z[1], \cdots, z[n]\}$ (the
$j-th$ component of $z$), $\exists l \in loc(e)$, $z[j] = [q]_l$.
Similarly, $\forall e ^{\prime}\in E$, $z_2 \in Z$,
$z_2\in\delta_{||}(z_1, e^{\prime})$: $\exists q_2 \in Q$, $\delta (q_1,
e^{\prime}) = q_2$. Now, if $\nexists E_i \in \{E_1, \cdots, E_n\}$,
$\{e, e ^{\prime}\} \in E_i$, then the definition of parallel
composition will furthermore induce that $\exists z_3 \in Z$,
$z_3\in\delta_{||}(z, e ^{\prime})$, $z_2\in\delta_{||}(z_3, e)$.
This, together with $DC1$ and $DC2$ implies that $\exists q_3, q_4
\in Q$, $\delta(q, e ^{\prime}) = q_3$, $\delta(q_3, e) = q_4$ and
that $\forall t\in E^*$, $\delta_{||}(z_2, t)!$: $\delta(q_2, t)!$
and $\delta(q_4, t)!$. Therefore, any path automaton in $\mathop
{||}\limits_{i = 1}^n P_i(A_S)$ is simulated by $A_S$, and hence,
$\delta(q_0, \mathop {|}\limits_{i = 1}^n \overline{p_i(s)})!$ in
$A_S$, $\forall s\in L(A_S)$.

For the interleavings in $\Gamma_2$, from the definition of
$\Gamma_2$, it follows that for any set of $s_i$, $\delta(q_0,
s_i)!$, $i \in \{1,\cdots, n\}$, two cases are possible for
$\Gamma_2$:

\textbf{Case 1:} $\forall s, s^{\prime}\in \{s_1,\cdots, s_n\}$,
$\forall E_i, E_j \in \{E_1,\cdots, E_n\}$: $p_{E_i\cap E_j}(s) =
\varepsilon$ and $p_{E_i\cap E_j}(s^{\prime}) = \varepsilon$. In this case,
projections of such strings $s_i$ can be written as $p_i(s_i) =
e_1^i,\cdots,e_{m_i}^i$, $ i = 1,\cdots, n$. The interleaving of
these projected strings leads to a grid of states and transitions
in $\mathop {\prod}\limits_{i = 1}^n \mathop {\prod}\limits_{j_i =
0}^{m_i} x_{j_i}^i$ as
$(x_{j_1}^{i_1},\cdots,x_{j_n}^{i_n})\overset{ e_j^i
}\longrightarrow (y_{j_1}^{i_1},\cdots,y_{j_n}^{i_n})$, with
$y_{j_i}^{i_k} = \left\{
\begin{array}{ll}
   x_{j_{i+1}}^{i_k}, & \hbox{ if } i = i_k, j = j_i+1\\
   x_{j_i}^{i_k}, & \hbox{otherwise}
\end{array}\right.$
 $j_i =
0,1,\cdots,m_i$, $i = 1,\cdots,n$, $i_k = 1,\cdots,n$, $k =
1,\cdots,n$. This grid of transitions is simulated by counterpart
transitions in $A_S$, as $\forall s, s^{\prime}\in
\{s_1,\cdots,s_n\}$, for any two successive/adjacent events $e_j^i$
and $e_{j^{\prime}}^{i^{\prime}}$, both orders exist in $A_S$, due
to $DC1$ and $DC2$, and hence, $\delta(q_{j_i, i_k},e_j^k) = q_{j_i,
i_k}^{\prime}$, $j_i = 0, 1, \cdots, m_i$, $i = 1, \cdots, n$, $i_k =
1,\cdots,n$, $k = 1,\cdots,n$. Therefore, for any choice of $s_i$
corresponding to $\Gamma_2$, $\delta(q_0, \mathop {|}\limits_{i =
1}^n \overline{p_i(s_i)})!$ in $A_S$.

\textbf{Case 2:} $\exists s, s^{\prime}\in \{s_1,\cdots, s_n\}$,
$\exists E_i, E_j \in \{E_1,\cdots, E_n\}$: $p_{E_i\cap E_j}(s) \neq
\varepsilon$ or $p_{E_i\cap E_j}(s^{\prime}) \neq \varepsilon$, but
they do not start with the same event. Any such $s$ and $s^{\prime}$
can be written as $s = t_1at_2$ and $s^{\prime} =
t_1^{\prime}bt_2^{\prime}$, where $t_1 = e_1\cdots e_m, t_1^{\prime}
= e_1^{\prime}\cdots e^{\prime}_{m^{\prime}} \notin (E_i\cap E_j)^*, \forall i ,j
\in \{1,\cdots,n\}, i \neq j$, $\exists i ,j \in \{1,\cdots,n\}, i
\neq j$, $a, b \in E_i\cap E_j$, $t_2, t_2^{\prime}\in E^*$.
Therefore, due to synchronization constraint, the interleaving of
strings will not evolve from $a$ and $b$ onwards, and hence,
$\overline{p_i(s)}|\overline{p_j(s^{\prime})} =
\overline{p_i(t_1)}|\overline{p_j(t_1^{\prime})}$ and
$\overline{p_i(s^{\prime})}|\overline{p_j(s)} =
\overline{p_i(t_1^{\prime})}|\overline{p_j(t_1)}$, and Case $2$ is
reduced to Case $1$, leading to $\delta(q_0, \mathop {|}\limits_{i =
1}^n \overline{p_i(s_i)})!$ in $A_S$.

Furthermore, due to $DC3$, for any two distinct strings $s, s^{\prime}\in \tilde{L}(A_S)$ (i.e., two strings starting from state $q$ in $A_S$ that $\exists E_i, E_j \in \{E_1,..., E_n\}$, $i\ne j$, $p_{E_i\cap E_j}(s), p_{E_i\cap E_j}(s^{\prime})$ start with the same event $a\in E_i\cap E_j$) we have $\mathop {||}\limits_{i = 1}^n P_i \left( {A} \right)\prec
A_S(q)$ (where $A:=\xymatrix@R=0.1cm{
                \ar[r]&  \bullet\ar[r]^{s}\ar[dr]_{s^{\prime}}&  \bullet \\
                &&\bullet    }$ and $A_S(q)$ denotes an automaton that is obtained from $A_S$, starting from $q$). This is particularly true for $q = q_0$. Therefore, $DC3$ implies that for the pair of strings $s, s^{\prime}$ (over the transitions in $\Gamma_3$), and corresponding automaton $A$, $L(\mathop {||}\limits_{i = 1}^n P_i \left( {A} \right))\subseteq L(A_S)$, that from the definition of synchronized product means that $\overset{n}{\underset{i=1}{\cap} } p_i^{-1}(\{\bar{s}, \bar{s^{\prime}}\})\subseteq L(A_S)$. For any pair of $s^{\prime}, s^{\prime\prime}\in \bar{L}(A_S)$ also $DC3$ similarly results in $\overset{n}{\underset{i=1}{\cap} } p_i^{-1}(\{\bar{s^{\prime}}, \bar{s^{\prime\prime}}\})\subseteq L(A_S)$, that collectively results in $\overset{n}{\underset{i=1}{\cap} } p_i^{-1}(\{\bar{s}, \bar{s^{\prime}}, \bar{s^{\prime\prime}}\})\subseteq L(A_S)$, due to the following lemma:
                \begin{lemma}\cite{Cassandras2008}
                For any two languages $L_1, L_2$ defined over an event set $E$ and a natural projection $p:E*\to E_i^*$, for $E_i\subseteq E$:
                $p_i(L_1\cup L_2) = p_i(L_1)\cup p_i(L_2)$ and $p_i^{-1}(L_1\cup L_2) = p_i^{-1}(L_1)\cup p_i^{-1}(L_2)$.
                \end{lemma}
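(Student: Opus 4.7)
The plan is to prove both identities by direct set-theoretic double inclusion, unwinding only the definitions of the natural projection and of the inverse image of a language. No automaton-theoretic machinery is needed, since the claim is a purely set-theoretic distributivity property that holds for any map and therefore for $p_i$ in particular. I would handle the two identities separately, because the forward projection requires exhibiting a witness string in the domain, whereas the inverse projection is characterised directly by a membership condition on $p_i(s)$.

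For the first identity $p_i(L_1\cup L_2)=p_i(L_1)\cup p_i(L_2)$, I would start from an arbitrary $t\in p_i(L_1\cup L_2)$, invoke the definition of image to produce some $s\in L_1\cup L_2$ with $p_i(s)=t$, split on whether $s\in L_1$ or $s\in L_2$, and in either case place $t$ in the corresponding $p_i(L_k)$; this gives $\subseteq$. The reverse inclusion is symmetric: any $t\in p_i(L_1)\cup p_i(L_2)$ is the image under $p_i$ of some $s$ lying in $L_1$ or in $L_2$, and since $L_k\subseteq L_1\cup L_2$, that same witness $s$ lies in $L_1\cup L_2$, so $t\in p_i(L_1\cup L_2)$.

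For the second identity $p_i^{-1}(L_1\cup L_2)=p_i^{-1}(L_1)\cup p_i^{-1}(L_2)$, the cleanest route is to use the defining equivalence $s\in p_i^{-1}(L)\Leftrightarrow p_i(s)\in L$ and chain the biconditionals
\begin{equation*}
s\in p_i^{-1}(L_1\cup L_2)\;\Leftrightarrow\; p_i(s)\in L_1\cup L_2\;\Leftrightarrow\; p_i(s)\in L_1\,\vee\,p_i(s)\in L_2\;\Leftrightarrow\; s\in p_i^{-1}(L_1)\cup p_i^{-1}(L_2),
\end{equation*}
which proves equality of the two sides as sets of strings in $E^{*}$.

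No substantive obstacle is anticipated: each inclusion reduces in one step to the logical equivalence between $x\in A\cup B$ and $x\in A\vee x\in B$, applied on either the domain or the codomain side of $p_i$. The only care required is to keep the two directions rhetorically distinct so that the existential witness used in the forward projection is not conflated with the membership test used in the inverse projection.
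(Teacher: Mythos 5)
Your proof is correct and complete. The paper itself offers no proof of this lemma---it is simply cited from the textbook reference---so there is nothing to compare against; your elementary double-inclusion argument for the image and the chain of biconditionals for the preimage are exactly the standard justification, and they establish both identities for an arbitrary map, hence in particular for the natural projection $p_i$. The only point worth a remark is a typing quibble with the statement rather than with your argument: for $p_i^{-1}(L_1\cup L_2)$ to be meaningful one should read $L_1,L_2\subseteq E_i^{*}$ in the second identity (whereas the first identity takes $L_1,L_2\subseteq E^{*}$); your proof is insensitive to this, since the preimage characterisation $s\in p_i^{-1}(L)\Leftrightarrow p_i(s)\in L$ works verbatim either way.
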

                This, inductively means that for $\{s_1 \cdots, s_m\}\subseteq \tilde{L}(A_S)$: $\overset{n}{\underset{i=1}{\cap} } p_i^{-1}(\{s_i\}_{i = 1}^m)\subseteq L(A_S)$, i.e., $\delta(q_0, \mathop {|}\limits_{i =
1}^n \overline{p_i(s_i)})!$ in $A_S$, for transitions in $\Gamma_3$.

Therefore, $DC3$ implies that all transitions in $\Gamma$ are simulated by transitions in $A_S$ that because of the determinism of $A_S$ results in
$\mathop {||}\limits_{i = 1}^n P_i \left( {A_S} \right)\prec
A_S$.

\textbf{Necessity:} The necessity is proven by contradiction. Assume
that $\mathop {||}\limits_{i = 1}^n P_i(A_S)\prec A_S$, but $DC1$,
$DC2$ or $DC3$ is not satisfied.

If $DC1$ is violated, then $\exists e_1, e_2 \in E$, $q\in Q$,
$\nexists E_i \in\{E_1,\cdots,E_n\}$, $\{e_1,e_2\}\subseteq E_i$,
$[\delta(q, e_1)!\wedge \delta(q, e_2)!]\wedge \neg [\delta(q,
e_1e_2)!\wedge \delta(q, e_2e_1)!]$. However, $\delta(q, e_1)!\wedge
\delta(q, e_2)!$, from the definition of natural projection, implies
that $\delta_i([q]_i, e_1)!\wedge \delta_j([q]_j, e_2)!$, in
$P_i(A_S)$ and $P_j(A_S)$, respectively, $\forall i\in loc(e_1),
j\in loc(e_2)$. This in turn, from definition of parallel
composition leads to $\delta_{||}(([q]_1,\cdots, [q]_n),\\ e_1)!
\wedge \delta_{||}(([q]_1,\cdots, [q]_n), e_2)!$ and
$\delta_{||}(([q]_1,\cdots, [q]_n), e_1e_2)! \wedge
\delta_{||}(([q]_1,\cdots, [q]_n), e_2e_1)!$. This means that
$\delta_{||}(([q]_1,\cdots, [q]_n), e_1e_2)! \wedge
\delta_{||}(([q]_1,\cdots, [q]_n), e_2e_1)!$ in $\mathop
{||}\limits_{i = 1}^n P_i(A_S)$, but $\neg [\delta(q, e_1e_2)!
\wedge \delta(q, e_2e_1)!]$ in $A_S$, i.e., $\mathop {||}\limits_{i
= 1}^n P_i(A_S)\nprec A_S$ which contradicts with the hypothesis.

If $DC2$ is not satisfied, then $\exists e_1, e_2 \in E$, $q\in Q$,
$\nexists E_i \in\{E_1,\cdots,E_n\}$, $\{e_1,e_2\}\subseteq E_i$,
$s\in E^*$, $ \neg [\delta(q, e_1e_2s)!\Leftrightarrow \delta(q,
e_2e_1s)!]$, i.e., $[\delta(q, e_1e_2s)!\vee \delta(q,
e_2e_1s)!]\wedge \neg [\delta(q, e_1e_2s)!\wedge \delta(q$,
$e_2e_1s)!]$. The expression $[\delta(q, e_1e_2s)!\vee \delta(q,
e_2e_1s)!]$ from definition of natural projection and Lemma
\ref{always As is similar to parallel decomposition}, respectively
implies that $\delta_{||}(([q]_1,\cdots, [q]_n), e_1e_2)! \wedge
\delta_{||}(([q]_1,\cdots, [q]_n), e_2e_1)!$ and
$\delta_{||}(([q]_1,\cdots, [q]_n), e_1e_2s)! \wedge
\delta_{||}(([q]_1,\cdots, [q]_n), e_2e_1s)!$ in $\mathop
{||}\limits_{i = 1}^n P_i(A_S)$. This in turn leads to\\
$\delta_{||}(([q]_1,\cdots, [q]_n), e_1e_2s)! \wedge
\delta_{||}(([q]_1,\cdots, [q]_n), e_2e_1s)!$ in $\mathop
{||}\limits_{i = 1}^n P_i(A_S)$, but $\neg [\delta(q, e_1e_2s)!
\wedge\\ \delta(q, e_2e_1s)!]$ in $A_S$, that contradicts with
$\mathop {||}\limits_{i = 1}^n P_i(A_S)\prec A_S$.

The violation of $DC3$ also leads to contradiction as $\delta(q_0,
s_i)!$, $i = 1,\cdots, n$, results in $\delta_{||}(([q_0]_1,\cdots
[q_0]_n), \mathop {|}\limits_{i = 1}^n \overline{p_i(s_i)})!$ in
$\mathop {||}\limits_{i = 1}^n P_i(A_S)$, whereas $\neg \delta(q_0,
\mathop {|}\limits_{i = 1}^n \overline{p_i(s_i)})!$ in $A_S$.

\section{Proof for Lemma \ref{symmetric}}\label{Appendix symmetric}
\textbf{Sufficiency:} Following two lemmas are used in the proof of
Lemma \ref{symmetric}.
\begin{lemma}(Lemma $9$ in \cite{Automatica2010-2-agents-decomposability})\label{symmetric simulations and
determinism} Consider two automata $A_1$ and $A_2$, and let $A_1$ be
deterministic, $A_1\prec A_2$ with the simulation relation $R_1$ and
$A_2\prec A_1$ with the simulation relation $R_2$. Then, $R_1^{-1} =
R_2$ if and only if there exists a deterministic automaton
$A^{\prime}_1$ such that $A_1^{\prime}\cong A_2$.
\end{lemma}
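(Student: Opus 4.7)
The proof splits naturally along the two directions of the biconditional, with the forward direction being almost immediate from definitions and the reverse direction requiring a more careful construction.

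For the forward direction $(\Rightarrow)$, assume $R_1^{-1} = R_2$. Then by the definition of bisimulation recalled earlier in the paper (two automata are bisimilar when they simulate each other via relations that are inverses of one another), we immediately conclude $A_1 \cong A_2$. Setting $A_1' := A_1$ works as the required witness, since $A_1$ is deterministic by hypothesis and $A_1' \cong A_2$ by construction. This direction is essentially a restatement of the bisimulation definition.

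For the reverse direction $(\Leftarrow)$, assume there exists a deterministic $A_1'$ with $A_1' \cong A_2$, witnessed by simulation relations $S_1 : A_1' \to A_2$ and $S_2 : A_2 \to A_1'$ satisfying $S_1^{-1} = S_2$. The plan is to compose with the given $R_1, R_2$ to obtain mutual simulations $T_1 := S_2 \circ R_1 : A_1 \to A_1'$ and $T_2 := R_2 \circ S_1 : A_1' \to A_1$ between two deterministic automata, then exploit determinism to conclude that $T_1$ and $T_2$ are forced to be inverses of each other, and finally transport this conclusion back through $S_1, S_2$ to the original $R_1, R_2$. The key fact used is that when two deterministic automata simulate each other, each reachable state in one corresponds under any simulation to a unique reachable state in the other generating the same future language; hence the simulation relations are uniquely pinned down and are mutual inverses.

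The main obstacle is the reverse direction, specifically the step showing that the given simulations $R_1$ and $R_2$ themselves (not merely some alternative pair) must be inverses. The leverage comes from determinism of $A_1$: for each state $q \in Q$ reachable via a unique string $s$ from $q_0$, any pair $(q, z) \in R_1$ forces $z$ to be reachable from $z_0$ on $s$ in $A_2$, and symmetrically any $(z, q') \in R_2$ forces $\delta_1(q_0, s) = q' = q$ when $z$ is reached from $z_0$ on $s$; combined with the bisimulation collapsing of $A_2$ onto the deterministic $A_1'$, this rules out the mismatches that could otherwise occur between a simulation and the inverse of the reverse simulation. I would close by explicitly checking membership both ways: $(q, z) \in R_1 \Rightarrow (z, q) \in R_2$ using the chain $R_1 \to S_2 \to T_1 \to T_2^{-1} \to S_1 \to R_2^{-1}$ induced by determinism, and conversely.
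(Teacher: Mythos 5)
You should first note that the paper does not actually prove this lemma in the present manuscript --- it is imported verbatim as Lemma 9 of the companion paper \cite{Automatica2010-2-agents-decomposability} --- so your argument can only be judged on its own terms. Your forward direction is fine: $R_1^{-1}=R_2$ together with the two simulations is exactly the paper's definition of $A_1\cong A_2$, and $A_1':=A_1$ is a deterministic witness.

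The reverse direction contains a genuine gap: you set out to prove that the \emph{given} relations $R_1$ and $R_2$ must be mutual inverses, and that statement is false under the paper's definition of simulation. Simulation relations are not unique --- one can always pad a simulation with extra pairs whose first component is a deadlocked state, since condition (2) of the bisimulation definition then holds vacuously. Concretely, take $A_1=A_2$ to be the deterministic automaton with transitions $q_0\xrightarrow{a}q_1$ and $q_0\xrightarrow{b}q_2$; then $R_1=\{(q_0,q_0),(q_1,q_1),(q_2,q_2),(q_1,q_2)\}$ and $R_2=\{(q_0,q_0),(q_1,q_1),(q_2,q_2)\}$ are legitimate simulation relations in the two directions, a deterministic $A_1'\cong A_2$ certainly exists (namely $A_2$ itself), yet $R_1^{-1}\neq R_2$. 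The same example refutes the ``key fact'' you invoke, that any simulation between mutually similar deterministic automata pins each reachable state to a unique partner with the same future language: a simulation only gives containment of future languages, and uniqueness fails at deadlocked states. The lemma is only tenable --- and is only ever used in the paper --- in the existential reading: \emph{there exist} witnessing relations with $R_1^{-1}=R_2$, i.e., $A_1\cong A_2$ in the paper's sense. Your composition scheme $T_1=S_2\circ R_1$, $T_2=R_2\circ S_1$ is the right skeleton for that version: it yields mutual simulations between the deterministic automata $A_1$ and $A_1'$, hence $L(A_1)=L(A_1')$, hence the canonical relation $\{(\delta_1(q_0,s),\delta_1'(q_0',s))\mid s\in L(A_1)\}$ is a bisimulation equal to the inverse of its transpose, and composing back through $S_1,S_2$ produces the required pair witnessing $A_1\cong A_2$. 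You need to reformulate the target of the reverse direction accordingly; as written, the closing step ``determinism forces $R_1$ and $R_2$ to be inverses'' would fail.
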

Next, let $A_1$ and $A_2$ be substituted by $A_S$ and $\mathop
{||}\limits_{i = 1}^n P_i(A_S)$, respectively, in Lemma
\ref{symmetric simulations and determinism}. Then, the existence of
$A_1^{\prime} = A_S^{\prime}$ in Lemma \ref{symmetric simulations
and determinism} is characterized by the following lemma.
\begin{lemma}\label{decomposition and determinism}
Consider a deterministic automaton $A_S$ and its natural projections
$P_i(A_S)$, $i = 1,\cdots, n$. Then, there exists a deterministic
automaton $A^{\prime}_S$ such that $A^{\prime}_S\cong \mathop
{||}\limits_{i = 1}^n P_i(A_S)$ if and only if there exist
deterministic automata $P_i^{\prime}(A_S)$ such that
$P_i^{\prime}(A_S) \cong P_i(A_S)$, $i = 1, \cdots, n$.
\end{lemma}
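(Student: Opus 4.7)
The plan is to prove the two directions of the biconditional separately, with the ($\Leftarrow$) direction being essentially a direct construction and the ($\Rightarrow$) direction requiring a more subtle argument to handle the determinism obstruction.

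For ($\Leftarrow$), given deterministic automata $P'_i(A_S)$ with $P'_i(A_S) \cong P_i(A_S)$ for each $i$, I would take $A'_S := \mathop{||}\limits_{i=1}^n P'_i(A_S)$. The parallel composition of deterministic automata is itself deterministic, because at each tuple state and event the transition function of the product is uniquely determined by the transition functions of the components, so $A'_S$ is deterministic. Then, inductively applying Lemma \ref{ParallelBiSimulation}, part 2, together with the associativity of parallel composition from Lemma \ref{associativity-parallel}, I can replace each $P'_i(A_S)$ by the bisimilar $P_i(A_S)$ inside the composition while preserving bisimilarity, yielding $A'_S \cong \mathop{||}\limits_{i=1}^n P_i(A_S)$.

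For ($\Rightarrow$), given a deterministic $A'_S$ with $A'_S \cong \mathop{||}\limits_{i=1}^n P_i(A_S)$, I would propose $P'_i(A_S) := P_i(A'_S)$ as the candidate deterministic automaton bisimilar to $P_i(A_S)$. The bisimilarity $P_i(A'_S) \cong P_i(A_S)$ would follow from two observations: first, natural projection is bisimulation-preserving, so $A'_S \cong \mathop{||}\limits_{j=1}^n P_j(A_S)$ implies $P_i(A'_S) \cong P_i(\mathop{||}\limits_{j=1}^n P_j(A_S))$; and second, the standard identity $P_i(\mathop{||}\limits_{j=1}^n P_j(A_S)) \cong P_i(A_S)$ holds because projecting the parallel composition onto one local event set recovers the corresponding component up to bisimulation.

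The main obstacle is showing that $P_i(A'_S)$ is deterministic, since natural projection of a deterministic automaton need not preserve determinism in general. I would argue by contradiction: suppose $P_i(A'_S)$ admitted two distinct $e$-successors $[y_1] \ne [y_2]$ from some equivalence class $[x]$. Then in $A'_S$ there would exist $x_1, x_2 \in [x]$ (with $x_1 \sim_{E\setminus E_i} x_2$) whose $e$-successors $\delta'(x_1, e)$ and $\delta'(x_2, e)$ fall in different $\sim_{E\setminus E_i}$-classes. Transporting this through the bisimulation $A'_S \cong \mathop{||}\limits_{j=1}^n P_j(A_S)$, the tuple-states corresponding to $x_1, x_2$ share the same $i$-th component (because $(E\setminus E_i)$-transitions leave the $i$-th component fixed in the product), yet the $i$-th components of their $e$-successors differ and must be non-bisimilar in $P_i(A_S)$. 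A distinguishing $E_i$-string between these components, suitably extended so as to be executable against the other synchronizing components in $\mathop{||}\limits_{j=1}^n P_j(A_S)$, produces two distinguishable successors of a single state reached by the same $e$-transition; pulling this back through the bisimulation to $A'_S$ contradicts the determinism of $A'_S$. This contradiction forces $P_i(A'_S)$ to be deterministic, completing the construction.
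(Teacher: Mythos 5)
Your ($\Leftarrow$) direction coincides with the paper's: form $A'_S := \mathop{||}_{i=1}^n P'_i(A_S)$, observe that a parallel composition of deterministic components is deterministic, and invoke Lemma~\ref{ParallelBiSimulation} to conclude $A'_S \cong \mathop{||}_{i=1}^n P_i(A_S)$. That half is fine.

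The ($\Rightarrow$) direction has genuine gaps. First, your candidate $P_i(A'_S)$ need not be deterministic, so the statement you set out to prove by contradiction is false in general: natural projection destroys determinism even of a deterministic source (this is the entire reason condition $DC4$ exists), and it can do so \emph{harmlessly}, i.e., $[x]$ can have two distinct $e$-successors $[y_1]\neq[y_2]$ that are nevertheless bisimilar. Then there is no distinguishing string, your contradiction never materializes, and yet $P_i(A'_S)$ is genuinely nondeterministic (e.g., $q_0\xrightarrow{u}q_1\xrightarrow{a}q_2$ together with $q_0\xrightarrow{a}q_3$, where $q_2,q_3$ are deadlocks, $u\notin E_i$, $a\in E_i$). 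The lemma only requires that \emph{some} deterministic automaton be bisimilar to $P_i(A_S)$, so the argument must be restructured around non-bisimilar successors only. Second, the two identities you call standard --- that projection preserves bisimilarity, and that $P_i(\mathop{||}_{j=1}^n P_j(A_S))\cong P_i(A_S)$ --- are both unproven; the second is false for parallel compositions in general (synchronization can add or delete branching in the $i$-th coordinate), and here only language equality is easy, while bisimilarity would need an argument comparable in difficulty to the lemma itself. Third, the step ``a distinguishing $E_i$-string, suitably extended so as to be executable against the other synchronizing components'' is exactly where the proof can break: a string in $E_i^*$ separating the two $i$-th components may be blocked by synchronization in $\mathop{||}_{j=1}^n P_j(A_S)$, in which case the distinction never surfaces in the product and no conflict with the determinism of $A'_S$ arises. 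The paper avoids all three issues by proving the contrapositive at the level of $A_S$: if no deterministic automaton is bisimilar to $P_1(A_S)$, there exist $q_1=\delta(q,t_1e)$ and $q_2=\delta(q,t_2e)$ with $t_1,t_2\in(E\setminus E_1)^*$ and a full-alphabet string $t$ with $\delta(q_1,t)!$ but $\neg\delta(q_2,t)!$; Lemma~\ref{always As is similar to parallel decomposition} then guarantees the live branch survives into $\mathop{||}_{i=1}^n P_i(A_S)$ while the dead one does not, so the product carries irremovable nondeterminism and no deterministic $A'_S$ can exist.
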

\begin{proof}
Let $A_S = (Q, q_0, E = \mathop {\cup}\limits_{i = 1}^n E_i,
\delta)$, $P_i(A_S) = (Q_i, q_0^i, E_i, \delta_i)$,
$P_i^{\prime}(A_S) = (Q_i^{\prime}, q_{0,i}^{\prime}, E_i,
\delta_i^{\prime})$, $i = 1, \cdots, n$, $\mathop {||}\limits_{i =
1}^n P_i(A_S) = (Z, z_0, E, \delta_{||})$, $\mathop {||}\limits_{i =
1}^n P_i^{\prime}(A_S) = (Z^{\prime}, z_0^{\prime}, E,
\delta_{||}^{\prime})$. Then, the proof of Lemma \ref{decomposition
and determinism} is presented as follows.

\textbf{Sufficiency:} The existence of deterministic automata
$P_i^{\prime}(A_S)$ such that $P_i^{\prime}(A_S) \cong P_i(A_S)$, $i
= 1, \cdots, n$ implies that $\delta^{\prime}_i$, $i = 1, \cdots, n$
are functions, and consequently from definition of parallel
composition (Definition \ref{parallel composition}),
$\delta_{||}^{\prime}$ is a function, and hence $\mathop
{||}\limits_{i = 1}^n P_i^{\prime}(A_S)$ is deterministic. Moreover,
from Lemma \ref{ParallelBiSimulation}, $P_i^{\prime}(A_S) \cong
P_i(A_S)$, $i = 1,\cdots, n$ lead to $\mathop {||}\limits_{i = 1}^n
P_i^{\prime}(A_S)\cong \mathop {||}\limits_{i = 1}^n P_i(A_S)$,
meaning that there exists a deterministic automaton $A^{\prime}_S :=
\mathop {||}\limits_{i = 1}^n P_i^{\prime}(A_S)$ such that
$A^{\prime}_S \cong \mathop {||}\limits_{i = 1}^n P_i(A_S)$.

\textbf{Necessity:} The necessity is proven by contraposition,
namely, by showing that if there does not exist deterministic
automata $P_i^{\prime}(A_S)$ such that $P_i^{\prime}(A_S) \cong
P_i(A_S)$, for $i = 1, 2, \cdots, \hbox{ or } n$, then there does
not exist a deterministic automaton $A^{\prime}_S$ such that
$A^{\prime}_S \cong \mathop {||}\limits_{i = 1}^n P_i(A_S)$.

Without loss of generality, assume that there does not exist a
deterministic automaton $P_1^{\prime}(A_S)$ such that
$P_1^{\prime}(A_S) \cong P_1(A_S)$. This means that $\exists q, q_1,
q_2 \in Q$, $e\in E_1$, $t_1, t_2\in (E\backslash E_1)^*$, $t\in
E^*$, $\delta(q, t_1e) = q_1$, $\delta(q, t_2e) = q_2$,
$\neg[\delta(q_1, t)!\Leftrightarrow \delta(q_2, t)!]$, meaning that
$\delta(q_1, t)!\wedge \neg\delta(q_2, t)!$ or $\neg\delta(q_1,
t)!\wedge \delta(q_2, t)!$. Again without loss of generality we
consider the first case and show that it leads to a contradiction.
The contradiction of the second case is followed, similarly. From
the first case, $\delta(q_1, t)!\wedge \neg\delta(q_2, t)!$,
definition of natural projection, definitions of parallel
composition and Lemma \ref{always As is similar to parallel
decomposition}
 it follows that
 $([q_1]_1, ([q_1]_2, \ldots, [q_1]_n))\in \delta_{||}(([q]_1, ([q]_2, \ldots, [q]_n)), t_1e)$,
 $([q_2]_1, ([q_1]_2, \ldots, [q_1]_n))\in \delta_{||}(([q]_1, ([q]_2, \ldots, [q]_n)), t_1e)$,
 $\delta(([q_1]_1, ([q_1]_2, \ldots, [q_1]_n)), t)!$, whereas  $\neg\delta(([q_2]_1, ([q_1]_2, \ldots, [q_1]_n)), t)!$
in $\mathop {||}\limits_{j = 1}^n P_i(A_S)$, implying that there
does not exist a deterministic automaton $A^{\prime}_S$ such that
$A^{\prime}_S \cong \mathop {||}\limits_{j = 1}^n P_i(A_S)$, and the
necessity is followed.
\end{proof}

Now, Lemma \ref{symmetric} is proven as follows.

\textbf{Sufficiency:} $DC4$ implies that there exist deterministic
automata $P_i^{\prime}(A_S)$ such that $P_i^{\prime}(A_S) \cong
P_i(A_S)$, $i = 1, \cdots, n$. Then, from Lemmas
\ref{ParallelBiSimulation} and \ref{decomposition and determinism},
it follows, respectively, that $\mathop {||}\limits_{i = 1}^n
P_i^{\prime}(A_S)\cong \mathop {||}\limits_{i = 1}^n P_i(A_S)$, and
that there exists a deterministic automaton $A^{\prime}_S := \mathop
{||}\limits_{i = 1}^n P_i^{\prime}(A_S)$ such that $A^{\prime}_S
\cong \mathop {||}\limits_{i = 1}^n P_i(A_S)$ that due to Lemma
\ref{symmetric simulations and determinism}, it results in $R_1^{-1}
= R_2$.

\textbf{Necessity:} Let $A_S$ be deterministic, $A_S\prec \mathop
{||}\limits_{i = 1}^n P_i(A_S)$ with the simulation relation $R_1$
and $\mathop {||}\limits_{i = 1}^n P_i(A_S)\prec A_S$ with the
simulation relation $R_2$, and assume by contradiction that
$R_1^{-1} = R_2$, but $DC4$ is not satisfied. Violation of $DC4$
implies that for $\exists i \in \{1,\cdots, n\}$, there does not
exists a deterministic automaton $P_i^{\prime}(A_S)$ such that
$P_i^{\prime}(A_S) \cong P_i(A_S)$. Therefore, due to Lemma
\ref{decomposition and determinism}, there does not exist a
deterministic automaton $A^{\prime}_S$ such that $A^{\prime}_S \cong
\mathop {||}\limits_{i = 1}^n P_i(A_S)$, and hence, according to
Lemma \ref{symmetric simulations and determinism}, it leads to
$R_1^{-1} \neq R_2$ which is a contradiction.

\bibliographystyle{IEEEtran}
\bibliography{Ref_13}

\end{document}